\newcommand{\addresseshere}{%
  \enddoc@text\let\enddoc@text\relax
}
\newcommand{\wh}{\widehat}
\newcommand{\R}{\mathbb{R}}
\renewcommand{\l}{\ell}
\renewcommand{\P}{\mathbf{P}}
\newcommand{\cH}{\mathcal{H}}
\newcommand{\PBin}{\mathbf{PBin}}
\newcommand{\Bin}{\mathbf{Bin}}
\newcommand{\ol}{\overline} 
\newcommand{\wt}{\widetilde}
\newcommand{\BH}{[\textnormal{BH}]} 
\newcommand{\LR}{[\textnormal{LR}]} 
\newcommand{\GR}{[\textnormal{GR}]} 
\newcommand{\PB}{[\textnormal{PB}]}
\newcommand{\HLR}{[\textnormal{HLR}]} 
\newcommand{\HGR}{[\textnormal{HGR}]} 
\newcommand{\WLR}{[\textnormal{wLR}]} 
\newcommand{\WLRAM}{[\textnormal{wLR-AM}]} 
\newcommand{\WLRGM}{[\textnormal{wLR-GM}]} 
\newcommand{\WPBAM}{[\textnormal{wPB-AM}]}
\newcommand{\WPBGM}{[\textnormal{wPB-GM}]} 
\newcommand{\WGR}{[\textnormal{wGR}]} 
\newcommand{\WGRAM}{[\textnormal{wGR-AM}]}
\newcommand{\WGRGM}{[\textnormal{wGR-GM}]}
\newcommand{\DLR}{[\textnormal{DLR}]} 
\newcommand{\DGR}{[\textnormal{DGR}]} 
\newcommand{\DPB}{[\textnormal{DPB}]}
\newcommand{\FDR}{\textnormal{FDR}}
\newcommand{\FDP}{\textnormal{FDP}}
\newcommand{\FDX}{\textnormal{FDX}}
\newcommand{\FWER}{\textnormal{FWER}}
\newcommand{\FAM}{F^{\textnormal{AM}}}
\newcommand{\FGM}{F^{\textnormal{GM}}}
\newtheorem{theorem}{Theorem}[section]
\newtheorem{proposition}{Proposition}[section]
\newtheorem{definition}{Definition}[section]
\newtheorem{remark}{Remark}[section]
\newtheorem{lemma}{Lemma}[section]
\begin{document}
\title[Controlling false discovery exceedance for heterogeneous tests]{Controlling false discovery exceedance for heterogeneous tests}
\author{Sebastian D\"ohler and Etienne Roquain}
\date{\today}
\begin{abstract}
Several classical methods exist for controlling the false discovery exceedance (FDX) for large scale multiple testing problems, among them the Lehmann-Romano procedure \cite{LR2005} ($\LR$ below) and the Guo-Romano procedure \cite{GR2007} ($\GR$ below). While these two procedures are the most prominent, they were originally designed for homogeneous test statistics, that is, when the null distribution functions of the $p$-values $F_i$, $1\leq i\leq m$, are all equal.  In many applications, however, the data are heterogeneous which leads to heterogeneous null distribution functions. Ignoring this heterogeneity usually  induces a conservativeness for the aforementioned procedures.  In this paper, we develop three new procedures that incorporate the $F_i$'s, while ensuring the FDX control. The heterogeneous version of $\LR$, denoted $\HLR$, is based on the arithmetic average of the $F_i$'s, while the heterogeneous version of $\GR$, denoted $\HGR$, is based on the geometric average of the $F_i$'s. We also introduce a  procedure $\PB$, that is based on the Poisson-binomial distribution and that uniformly improves $\HLR$ and $\HGR$,  at the price of a higher computational complexity. Perhaps surprisingly, this shows that, contrary to the known theory of false discovery rate (FDR) control under heterogeneity, the way to incorporate the $F_i$'s can be particularly simple  in the case of FDX control, and does not require any further correction term. 
The performances of the new proposed procedures are illustrated by real and simulated data in two important heterogeneous settings: first, when the test statistics are continuous but the $p$-values are weighted by some known independent weight vector, e.g., coming from co-data sets;  second,  when the test statistics are discretely distributed, as is the case for data representing frequencies or counts.
\end{abstract}

\maketitle

\section{Introduction}

\subsection{Background}

 When many statistical tests are performed simultaneously, 
a ubiquitous way to account for the erroneous rejections of the procedure is the false discovery proportion (FDP), that is, the proportion of errors in the rejected sets, as introduced in the seminal paper \cite{BenjaminiHochberg95}. Most of the related literature studies the expected value of this quantity, which is the false discovery rate (FDR), e.g., building procedures that improve the original Benjamini-Hochberg procedure by trying to adapt to some underlying structure of the data. In particular, a fruitful direction is to take into account the heterogeneous structure of the different tests.  Heterogeneity may originate from various sources. The two main examples we have in mind, and which have been intensively investigated in the statistical literature recently, is heterogeneity caused by $p$-value weighting and discrete data.

The $p$-value weighting is a popular approach that can be traced back to  \cite{Holm1979} and that has been further developed specifically for FDR in, e.g.,  \cite{GRW2006,BR2008EJS,HZZ2010,ZZ2014,Ramdas2017}. 
Here, the heterogeneity can be for instance driven by sample size, groups, or more generally by some covariates. 
In particular, finding 
optimal weighting in the sense of maximizing the number of true rejections has been investigated in \cite{WR2006,RDV2006,RW2009,Ign2016,Durand2017}, either from independent weighting or from the same data set. As a result, the weighted $p$-values have heterogeneous null distribution functions $\{F_i,1\leq i\leq m\}$ that must be properly taken into account by multiple testing procedures.  

On the other hand,
multiple testing for discrete distributions is a well identified research field \cite{Tar1990,WestWolf1997,Gilbert05} that has received a growing attention  in the last decade, see, e.g.,
\cite{Heyse2011,Heller2012,Dickhaus2012,Habiger2015,CDS2015,Doehler2016,CDH2018,DDR2018,DDR2019} and references therein. 
The most typical setting is the case for which each test is performed according to a contingency table. In that situation, the heterogeneity is induced by the fact that marginal counts naturally vary from one table to another. The approach is then to suitably combine the heterogeneous null distributions to compensate the natural conservativeness of individual discrete tests.
Namely, 
Heyse's approach \cite{Heyse2011} 
 is to 
consider the transformation
\begin{equation}\label{equFbar}
\ol{F}(t) = m^{-1}\sum_{i=1}^m F_i(t),\:\:t\in[0,1],
\end{equation}
and to apply BH to the transformed $p$-values $\{\ol{F}(p_i),1\leq i\leq m\}$. Unfortunately, the latter does not rigorously control the FDR, as it has been proven in \cite{Doehler2016,DDR2018}. Appropriate corrections of the $\ol{F}$ expression have been  proposed in \cite{DDR2018} in order to recover rigorous FDR control.

\subsection{FDX control}
A common criticism of FDR is that it 
captures only  the average behavior of the FDP. In particular,  controlling the FDR does not prevent the FDP from possessing undesirable fluctuations and we may aim to stochastically control of FDP in other ways.
A classical approach is to control the false probability exceedance (FDX) in the following sense: for $\alpha, \zeta \in (0,1)$, 
\begin{equation}\label{equFDX}
\FDX=\P(\FDP>\alpha) \leq \zeta.
\end{equation}
This corresponds to control the $(1-\zeta)$-quantile of the FDP distribution at level $\alpha$, see, e.g., \cite{GW2004,PGVW2004,Korn2004,LR2005,GW2006,RW2007,GHS2014,DR2015}.
Let us also mention that the probabilistic fluctuation of the FDP process is  of interest in its own, see, e.g., \cite{Neu2008,RV2011,DR2011,DR2015a,DJ2019}.

Among multiple testing procedures,  step-down procedures have been shown to be particularly useful for FDX control. 
Two prominent step-down procedures have been proven to control the FDX under different distributional assumptions:
\begin{itemize}
\item The Lehmann-Romano procedure $\LR$, introduced in \cite{LR2005},  is defined as the step-down procedure with critical values 
\begin{align}
\tau^{\mbox{{\tiny LR}}}_\l=\zeta\frac{\lfloor \alpha \l\rfloor+1}{m(\l)}\:\:, 1\leq \l\leq m
\label{crit:LR},
\end{align}
where we denote 
\begin{equation}\label{equml}
m(\l)=m-\ell + \lfloor \alpha \l\rfloor+1.
\end{equation}
 It has been shown to control the FDX  under various dependence assumptions between the $p$-values, e.g., when each $p$-value under the null is independent of the family of the $p$-values under the alternative (Theorem~3.1 in \cite{LR2005}), which we will refer to \eqref{Indep0} below, or when the Simes inequality holds true among the family of true null $p$-values (Theorem~3.2 in \cite{LR2005}). 
\footnote{Under the latter condition, it has also been proven later that 
the step-up version of $\LR$, that is, the step-up procedure using the critical values \eqref{crit:LR} also controls the FDX, see the proof of Theorem~3.1 in \cite{GHS2014}. }

\item The procedure $\LR$ has been improved by the Guo-Romano procedure $\GR$, see \cite{GR2007}, defined as the step-down procedure with critical values
\begin{align}
\tau^{\mbox{{\tiny GR}}}_\l&= \max\{t\in [0,1]\::\: \P( \Bin[m(\l), t]  \geq \lfloor \alpha \l\rfloor+1) \leq \zeta\},\:\: 1\leq \l\leq m,
\label{crit:GR}
\end{align}
where $\Bin[n,p]$ denotes any variable following  a binomial distribution with parameters $n$ and $p$. 
While making more rejections, the procedure $\GR$ controls the FDX under a stronger assumption: the null $p$-value family and the alternative $p$-value family are independent and that the null $p$-values are independent, which we will refer to \eqref{Indep} below. 

\end{itemize}

\subsection{Contributions}

The global aim of the paper is to 
improve procedures $\LR$ and $\GR$ by incorporating the null distribution functions $\{F_i,1\leq i\leq m\}$ of the $p$-values while maintaining  rigorous FDX control.
More specifically, the contributions of this work are as follows:
\begin{itemize}
\item[\textbullet]
we introduce the heterogeneous Lehmann Romano procedure $\HLR$, which controls the FDX under \eqref{Indep0} and is an uniform improvement of $\LR$ (when the marginals of the null $p$-values are super-uniform, see \eqref{superunif} further on); 
\item[\textbullet] 
we introduce the heterogeneous Guo Romano procedure $\HGR$, 
 which controls the FDX under \eqref{Indep} and is an uniform improvement of $\GR$ (under \eqref{superunif});
\item[\textbullet] 
at the price of additional computational complexity, we introduce the Poisson-binomial procedure $\PB$, which controls the FDX under \eqref{Indep} and is a uniform improvement of $\HLR$ and $\HGR$;
\item[\textbullet]  we apply this new technology to weighted $p$-values to provide the first weighted procedures that control the FDX (to our knowledge), called $\WLR$ and $\WGR$. They are able to improve their non-weighted counterparts $\LR$ and $\GR$, respectively, see Section~\ref{sec:weighting};
\item[\textbullet] in the discrete context, our new procedures re-named $\DLR$, $\DGR$ are shown to be uniform improvements with respect to the continuous procedures $\LR$ and $\GR$, respectively. To the best of our knowledge, these are the first FDX controlling procedures tailored specifically to discrete $p$-value distributions. The amplitude of the improvement can be substantial, as we show both with simulated and real data examples, see Section~\ref{sec:discrete}. 
\end{itemize}
The paper is organized as follows: Section~\ref{sec:framework} introduces the statistical setting, the procedures and FDX criterion, as well as a shortcut to compute our step-down procedures without evaluating the critical values. Section~\ref{sec:NewFDXProcedures} is the main section of the paper, which introduces the new heterogeneous procedures and their FDX controlling properties. Our methodology is then applied in two particular frameworks: new weighted procedures controlling the FDX are derived in Section~\ref{sec:weighting} while 
 Section~\ref{sec:discrete} is devoted to the case where the tests are discrete. Both sections include numerical illustrations. A discussion is provided in Section~\ref{sec:discussion} and most of technical details are deferred to Section~\ref{sec:theory}. Appendix~\ref{sec:appendix} gives additional numerical details for simulations.

\section{Framework}\label{sec:framework}

\subsection{Setting}\label{sec:setting}

We use here a classical formal setting for heterogeneous nulls, see, e.g.,  \cite{DDR2018}. We observe $X$, defined on an abstract probabilistic
space, valued in an observation space
$(\mathcal{X},\mathfrak{X})$ and of distribution  $P$ that 
belongs to a set $\mathcal{P}$ of
possible distributions. We consider $m$ null hypotheses for $P$, denoted $H_{0,i}$ ,
$1\leq i \leq m$,  and we denote the
corresponding set of true null hypotheses by $\cH_0(P)=\{1\leq i
\leq m\::\: \mbox{$H_{0,i}$ is satisfied by $P$}\}$. We also
denote by $\cH_1(P)$ the complement of $\cH_0(P)$ in
$\{1,\dots,m\}$ and by $m_0(P)=|\cH_0(P)|$ the number of true
nulls.

We assume that there exists a set of $p$-values that is, a set of random variables $\{p_i(X), 1\leq i
\leq m\}$, valued in $[0,1]$. We introduce the following dependence assumptions between the $p$-values:
\begin{align}
&\mbox{for all $P\in \mathcal{P}$, $\{p_i(X),  i \in \cH_0(P)\}$ is independent of $\{p_i(X),  i \in \cH_1(P)\}$;}\label{Indep0}\tag{Indep0}\\
&\mbox{\eqref{Indep0} holds and for all $P\in \mathcal{P}$, $\{p_i(X),  i \in \cH_0(P)\}$ consists of independent variables.}\label{Indep}\tag{Indep}
\end{align}
Note that \eqref{Indep0} and \eqref{Indep} are both satisfied when all the  $p$-values  $p_i(X)$, $1\leq i \leq m$, are mutually independent in the model $\mathcal{P}$. 
The (maximum) null cumulative distribution function of each $p$-value is denoted 
\begin{equation}\label{equ:Fi}
 F_{i}(t) = \sup_{P\in
\mathcal{P}\::\: i\in \cH_0(P)} \{\P_{X\sim P}(p_{i}(X)\leq t )\}, \:\:
t\in[0,1], \:\:1\leq i\leq m.
\end{equation}
We let $\mathcal{F}=\{F_{i}, 1\leq i \leq m\}$ that we supposed to be {\it known} and we consider the following possible situations for the functions in $\mathcal{F}$:
\begin{align}
&\mbox{for all $i \in\{1,\dots,m\}$, $F_i$ is continuous on $[0,1]$} \tag{Cont} \label{cont}\\
&\begin{array}{c}\mbox{for all $i \in\{1,\dots,m\}$, there exists some countable set $\mathcal{A}_i\subset [0,1]$ such that}\\
\mbox{$F_i$ is a step function, right continuous, that jumps only at some points of $\mathcal{A}_i$.}
\end{array}
 \tag{Discrete} \label{discrete}
\end{align}
The case \eqref{discrete} typically arises when for all $P\in \mathcal{P}$ and $i\in\{1,\dots,m\}$, $\P_{X\sim P}(p_i(X)\in \mathcal{A}_i)=1$.
Throughout the paper, we will assume that we are either in the case \eqref{cont} or \eqref{discrete} and we denote $\mathcal{A}=\cup_{i=1}^m \mathcal{A}_i$, with by convention $\mathcal{A}_i=[0,1]$ when \eqref{cont} holds.
For comparison with the homogeneous case, we also let
\begin{align}
&\mbox{for all $i \in\{1,\dots,m\}$,  $F_{i}(t)\leq t$ for all $t\in[0,1]$}.\tag{SuperUnif}\label{superunif}
\end{align}

\subsection{False Discovery Exceedance and step-down procedures}

In general, a multiple testing procedure is defined as a random subset  $R=R(X)\subset \{1,\dots,m\}$ which corresponds to the indices of the rejected nulls.
For $\alpha\in(0,1)$, the false discovery exceedance of $R$ is defined as follows:
\begin{align}\label{equ:FDX}
\FDX_{\alpha}(R,P)  = \P_{X\sim P}\left( \frac{|R(X)\cap \cH_0(P)|}{|R(X)|\vee 1} > \alpha \right),\:\:\:P\in\mathcal{P}.
\end{align}

In this paper, we consider particular multiple testing procedures, called step-down procedures. Given some $p$-value family $(p_i)_{1\leq i \leq m}$ and some non-decreasing sequence $(\tau_\l)_{1\leq \l\leq m}\in[0,1]^m$, the step-down procedure with critical values $(\tau_\l)_{1\leq \l\leq m}\in[0,1]^m$ rejects the  null hypotheses corresponding to the set  
\begin{align}
R&=\{i\in\{1,\dots,m\}\::\: p_i(X)\leq \tau_{\hat\l}\}\label{SDrejectionsetR}\\
\wh{\l} &= \max\{\l\in \{0,\dots,m\}\::\: \forall \l'\leq \l, \: p_{\sigma(\l')}\leq \tau_{\l'}\}, \:\:\:\mbox{ (convention $p_{\sigma(0)}=0$)},\label{SDrewrite}
\end{align}
for which $
 p_{\sigma(1)}\leq \dots \leq p_{\sigma(m)}
$ denotes the $p$-values $\{p_i(X), 1\leq i \leq m\}$ ordered increasingly (for some data-dependent permutation $\sigma$).

\subsection{Transformation function family and computational shortcut}\label{sec:shortcut}

In this paper, the critical values will be obtained by inverting some functional, that is,  
\begin{equation}\label{inversecritvalues}
\tau_\l = \xi_\l^{-1}(\zeta) = \max\{t\in \mathcal{A}\::\: \xi_\l(t)\leq \zeta\} , \mbox{($\tau_\l=0$ if the set is empty)},\:\:\: 1\leq \l\leq m,
\end{equation}
for $\xi_\l: [0,1]\mapsto [0,\infty)$, $1\leq \l\leq m$, a given set of functions. In order for \eqref{inversecritvalues} to be well-defined and $\l\mapsto \tau_\l $ to be nondecreasing, we  will say that the function set $\{\xi_\l, 1\leq \l\leq m\}$ is a {\it transformation function family} if it satisfies
the following conditions:
\begin{align}
&\begin{array}{l}
\mbox{for all $\l \in\{1,\dots,m\}$, $\xi_\l$ is a non-decreasing function;}\\
\mbox{for all  $t\in[0,1]$ and all $\l\in\{1,\dots,m-1\}$, we have 
$\xi_{\l+1}(t)\leq \xi_\l(t)$;}\\
\mbox{in case \eqref{cont}, for all $\l \in\{1,\dots,m\}$,  $\xi_\l$ is continuous on $[0,1]$.}
\end{array}
 \label{condtransfo}
\end{align}

For instance, the critical values of the procedure $\LR$ can be rewritten as \eqref{inversecritvalues} for the functions
\begin{equation}\label{xilLR}
\xi^{\mbox{{\tiny LR}}}_{\l}(t) =  \frac{m(\l)}{\lfloor \alpha \l\rfloor+1} t,\:\:t\in[0,1],\:\:1\leq \l\leq m.
\end{equation}
We easily  check that the function set $\{\xi^{\mbox{{\tiny LR}}}_\l, 1\leq \l\leq m\}$ is a family of transformation functions (in the sense of \eqref{condtransfo}). Indeed,  $\frac{m-\l+i}{i}$ is non-increasing both in $\l\in\{1,\dots,m\}$ and $i\in\{1,\dots,\lfloor \alpha m\rfloor+1\}$.
A second example is given by the procedure $\GR$ for which 
\begin{align}\label{xi:GR}
\xi^{\mbox{{\tiny GR}}}_\l(t)&=\P( \Bin[m(\l), t] \geq \lfloor \alpha \l\rfloor+1), \:\:\:1\leq \l\leq m,\:t\in[0,1],
\end{align}
can be proved to form a family of transformation functions. Indeed, the only non-obvious argument to prove  \eqref{condtransfo}  is that for a fixed $t\in[0,1]$, and $\l\in\{1,\dots,m-1\}$ we have $\xi^{\mbox{{\tiny GR}}}_{\l+1}(t)\leq \xi^{\mbox{{\tiny GR}}}_\l(t)$. This comes from the fact that 
$
\P( \Bin[m-\l+i, t]  \geq i) =\P( \Bin[m-\l+i, 1-t]  \leq m-\l) 
$ 
is non-increasing both in $i$ and $\l$.

Finally, because of the inversion, computing the critical values via \eqref{inversecritvalues} can be time consuming.  Fortunately, computing the critical values is actually not necessary if we are solely interested in determining the rejection set $R$ given by \eqref{SDrejectionsetR}.  
As the following result shows, we may determine $R$ by working directly with the  transformation functions.

\begin{proposition}\label{prop:equivSD}
Let us consider any transformation function family $\{\xi_\l, 1\leq \l\leq m\}$ 
and the corresponding critical values $\tau_\l$, $1\leq \l\leq m$, defined by \eqref{inversecritvalues}.
Then, for all $P\in\mathcal{P}$, with $P$-probability $1$, the step-down procedure $R$ with critical values $(\tau_\l)_{1\leq \l\leq m}$ can equivalently written as
\begin{align}
R&=\{i\in\{1,\dots,m\}\::\: \tilde{p}_i\leq \zeta\};\label{Rshortcut}\\
\tilde{p}_i& =\max_{\substack{1\leq \l\leq m\\p_{\sigma(\l)}\leq p_i}}\{\xi_\l(p_{\sigma(\l)})\},\:\:1\leq i\leq m.\label{equadjpvalues}
\end{align}
\end{proposition}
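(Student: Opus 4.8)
The plan is to fix a realisation on the $P$-almost sure event $\{p_i(X)\in\mathcal{A}\ \text{for all }i\}$ (this event is the whole space in case \eqref{cont}, and its probability is $1$ in the discrete settings under consideration — which is where the ``$P$-probability $1$'' of the statement enters), to fix once and for all an increasing ordering $\sigma$ as in \eqref{SDrewrite}, and then to show that the set $R$ defined by \eqref{SDrejectionsetR}--\eqref{SDrewrite} coincides with the set in \eqref{Rshortcut}.

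First I would establish the elementary equivalence: for every $\l\in\{1,\dots,m\}$ and every $t\in\mathcal{A}$,
\[
t\le\tau_\l\iff\xi_\l(t)\le\zeta .
\]
The implication ``$\Leftarrow$'' is immediate from \eqref{inversecritvalues}, since $t$ then belongs to $\{s\in\mathcal{A}:\xi_\l(s)\le\zeta\}$, a nonempty set whose maximum is $\tau_\l$. For ``$\Rightarrow$'', monotonicity of $\xi_\l$ (first line of \eqref{condtransfo}) gives $\xi_\l(t)\le\xi_\l(\tau_\l)$, and $\xi_\l(\tau_\l)\le\zeta$ because $\tau_\l$ itself lies in $\{s\in\mathcal{A}:\xi_\l(s)\le\zeta\}$; this last point is where some care is needed, invoking continuity of $\xi_\l$ in case \eqref{cont} (so that this set is closed) and the fact that the maximum in \eqref{inversecritvalues} is attained in case \eqref{discrete}. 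Applying this with $t=p_{\sigma(\l)}$ (which is in $\mathcal{A}$ on our event) turns \eqref{SDrewrite} into $\wh{\l}=\max\{\l:\xi_{\l'}(p_{\sigma(\l')})\le\zeta\ \text{for all }\l'\le\l\}$ and, together with \eqref{equadjpvalues}, shows that $\wt{p}_i\le\zeta$ holds if and only if $p_{\sigma(\l)}\le\tau_\l$ for every $\l$ with $p_{\sigma(\l)}\le p_i$.

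Then I would run the standard step-down bookkeeping, whose only ingredient is that $\l\mapsto\tau_\l$ is nondecreasing (second line of \eqref{condtransfo}). For $\l\le\wh{\l}$ one has $p_{\sigma(\l)}\le\tau_\l\le\tau_{\wh{\l}}$; and if $\wh{\l}<m$, maximality of $\wh{\l}$ forces $p_{\sigma(\wh{\l}+1)}>\tau_{\wh{\l}+1}\ge\tau_{\wh{\l}}$, hence $p_{\sigma(\l)}\ge p_{\sigma(\wh{\l}+1)}>\tau_{\wh{\l}}$ for all $\l>\wh{\l}$. Therefore $R=\{\sigma(1),\dots,\sigma(\wh{\l})\}$ and, crucially, $\wh{\l}=m$ or $p_{\sigma(\wh{\l})}<p_{\sigma(\wh{\l}+1)}$, i.e. $\wh{\l}$ never falls strictly inside a block of tied $p$-values. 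Now fix $i$ and let $j=j(i)$ be the largest index with $p_{\sigma(j)}=p_i$, so that $\{\l:p_{\sigma(\l)}\le p_i\}=\{1,\dots,j\}$; by the first step, $\wt{p}_i\le\zeta$ holds if and only if $p_{\sigma(\l)}\le\tau_\l$ for all $\l\le j$, that is, if and only if $\wh{\l}\ge j$. On the other hand, from the description of $R$ just obtained together with the no-splitting property, $i\in R$ if and only if $j\le\wh{\l}$ as well. Hence $R=\{i:\wt{p}_i\le\zeta\}$, which is \eqref{Rshortcut}.

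The routine parts are the monotonicity manipulations in the last step. The two points that require genuine care are (i) the equivalence $\{p_{\sigma(\l)}\le\tau_\l\}=\{\xi_\l(p_{\sigma(\l)})\le\zeta\}$, where the behaviour of the generalised inverse \eqref{inversecritvalues} at the endpoints and the dichotomy \eqref{cont}/\eqref{discrete} must be handled, and where the almost-sure restriction $p_i(X)\in\mathcal{A}$ is used; and (ii) the treatment of ties among the $p$-values, which is dealt with by the observation that nondecreasingness of $(\tau_\l)_{1\le\l\le m}$ prevents $\wh{\l}$ from splitting a tied block, so that both the rejection set $R$ and the inequality $\wh{\l}\ge j(i)$ are unambiguous.
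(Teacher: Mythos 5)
Your proof is correct and follows essentially the same route as the paper's: restrict to the almost-sure event $\{p_i\in\mathcal{A}\text{ for all }i\}$, convert $p_{\sigma(\ell)}\le\tau_\ell$ into $\xi_\ell(p_{\sigma(\ell)})\le\zeta$ via the generalised inverse, rewrite $\wh{\ell}$, and then resolve ties using monotonicity in $\ell$. The only cosmetic difference is in the tie-handling: the paper shows that the tied indices beyond $\sigma^{-1}(i)$ cannot increase the maximum in \eqref{equadjpvalues} because $\xi_\ell$ is non-increasing in $\ell$, whereas you argue on the rejection-set side that $\wh{\ell}$ cannot split a tied block because $\tau_\ell$ is non-decreasing --- two equivalent uses of the same monotonicity.
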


Proposition~\ref{prop:equivSD} is proved in Section~\ref{sec:proof:prop:equivSD}.

\section{New FDX controlling procedures} \label{sec:NewFDXProcedures}

In this section, we introduce new procedures $R$ that control the false discovery exceedance at some level $\zeta\in(0,1)$, that is, 
\begin{equation}\label{FDXcontrol}
\mbox{for all $P\in\mathcal{P}$, 
$\FDX_{\alpha}(R,P)\leq \zeta$},
\end{equation}
 while incorporating the family $\{F_i,1\leq i \leq m\}$ in an appropriate way.

\subsection{Tool}

Our main tool is the following bound: For 
any step-down procedure $R$ with critical values $\tau=(\tau_\l)_{1\leq \l\leq m}$, we have
\begin{align}
\sup_{P\in\mathcal{P}}\{\FDX_{\alpha}(R,P)\}&\leq B(\tau,\alpha)\label{toolbound}\\
\mbox{ for }\:\:\:B(\tau,\alpha)&=\sup_{1\leq \l\leq m}\sup_{\substack{P\in\mathcal{P}\\ |\cH_0(P)|\leq m(\l)}} \P_{X\sim P}\left(\sum_{i\in  \cH_0(P)} \mathds{1}\{p_i(X)\leq \tau_\l\} \geq \lfloor\alpha\l\rfloor+1\right).\label{equB}
 \end{align}
 Inequality~\eqref{toolbound}  is valid under the distributional assumption \eqref{Indep0}.  
 This bound comes from a reformulation of Theorem~5.2 in \cite{Roq2011} in our heterogenous framework, see Theorem~\ref{maintool} below. 
Our new procedures
are derived by further upper-bounding $B(\tau,\alpha)$ via various probabilistic devices. More specifically, we will introduce several  transformation function families $\{\xi_\l,1\leq \l\leq m\}$ such that for all $\tau=\{\tau_\l\}_\l$, 
$$
B(\tau,\alpha)\leq \sup_{1\leq \l\leq m} \{\xi_\l(\tau_\l)\}.
$$
According to \eqref{toolbound},  the step-down procedure using the corresponding critical values \eqref{inversecritvalues} will then control the FDX in the sense of \eqref{FDXcontrol}.

\subsection{Heterogeneous Lehmann-Romano procedure}

By using the Markov inequality, we obtain 
\begin{align} \label{eq:BoundLR}
B(\tau,\alpha)&\leq \sup_{1\leq \l\leq m}\sup_{\substack{P\in\mathcal{P}\\ |\cH_0(P)|\leq m(\l)}} \frac{\sum_{i\in  \cH_0(P)} F_i(\tau_\l)}{\lfloor\alpha\l\rfloor+1}= \sup_{1\leq \l\leq m} \frac{\sum_{j=1}^{m(\l)} (F(\tau_\l))_{(j)}}{\lfloor\alpha\l\rfloor+1},
 \end{align}
where $(F(t))_{(1)}\geq \dots \geq (F(t))_{(m)}$ denotes the values of $\{F_i(t),1\leq i\leq m\}$ ordered decreasingly.
Bounding the above quantity by $\zeta$ entails the following procedure.

\begin{definition} 
The heterogeneous Lehmann-Romano procedure, denoted by $\HLR$, is defined as the step-down procedure using the critical values defined by
\begin{align}
\tau^{\mbox{{\tiny HLR}}}_\l & = \max\{t\in \mathcal{A}\::\: \xi^{\mbox{{\tiny HLR}}}_\l(t)\leq \zeta\} ,\:\:\: 1\leq \l\leq m;\label{tau:HLR}\\
\xi^{\mbox{{\tiny HLR}}}_\l(t)&= \frac{\sum_{j=1}^{m(\l)} (F(t))_{(j)}}{\lfloor \alpha \l\rfloor+1}, 
 \:\:\:1\leq \l\leq m,\:t\in[0,1],\label{xi:HLR}
\end{align}
where $(F(t))_{(1)}\geq \dots \geq (F(t))_{(m)}$ denotes the values of $\{F_i(t),1\leq i\leq m\}$ ordered decreasingly and $m(\l)$ is defined by \eqref{equml}.
\end{definition}

The quantity $\xi^{\mbox{{\tiny HLR}}}_\l(t)$ is thus similar to $\xi^{\mbox{{\tiny LR}}}_\l(t)$, in which $t$ has been replaced by 
 the average of the $m(\l)$ largest values of $\{F_i(t),1\leq i\leq m\}$. 
To check that the functions $\xi^{\mbox{{\tiny HLR}}}_\l$
form a transformation function family in the sense of \eqref{condtransfo}, we note that $\frac{1}{m(\l)} \sum_{j=1}^{m(\l)} (F(t))_{(j)}$ is non-increasing in $\l$ (averaging on smaller values makes the average smaller) and continuous in $t$ under \eqref{cont} (because $t\mapsto (F_i(t))_{1\leq i\leq m}$ is continuous and $x\in (\R^m,\|\cdot\|_\infty)\mapsto N^{-1}\sum_{k=1}^N x_{(k)} \in (\R,|\cdot|)$ is $1$-Lipschitz).

In the classical 
case \eqref{superunif}, we have $\xi^{\mbox{{\tiny HLR}}}_\l(t)\leq \xi^{LR}_\l(t)$ for all $t\in[0,1]$ and $1\leq \l\leq m$. Hence,  $\HLR$ is less conservative than $\LR$ in that situation. 
A technical detail is that this only holds almost surely because the range $\mathcal{A}$ in \eqref{tau:HLR} can be different from $[0,1]$ in the case \eqref{discrete}. 
This entails the following result.

\begin{proposition} \label{prop:HLR}
In the setting defined in Section~\ref{sec:setting}, the procedure $\HLR$ satisfies the following
\begin{itemize}
\item[\textbullet] Under \eqref{Indep0}, $\HLR$ controls the $\FDX$ in the sense \eqref{FDXcontrol};
\item[\textbullet] Under \eqref{superunif}, the set of nulls rejected by $\HLR$ contains the one of $\LR$ with $P$-probability $1$, for all $P\in\mathcal{P}$.
\end{itemize}
\end{proposition}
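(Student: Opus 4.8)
The plan is to verify the two bullet points in turn, both of which follow quickly from the infrastructure already established in the excerpt.

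\medskip

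\emph{FDX control under \eqref{Indep0}.} First I would observe that, by the computation in \eqref{eq:BoundLR}, the functions $\xi^{\mbox{{\tiny HLR}}}_\l$ defined in \eqref{xi:HLR} satisfy $B(\tau,\alpha)\leq \sup_{1\leq \l\leq m}\xi^{\mbox{{\tiny HLR}}}_\l(\tau_\l)$ for every nondecreasing critical value sequence $\tau=(\tau_\l)_{1\leq\l\leq m}$; this is exactly the Markov-inequality bound already written down, together with the identification of $\sum_{i\in\cH_0(P)}F_i(\tau_\l)$ with the sum of the $m(\l)$ largest values $\sum_{j=1}^{m(\l)}(F(\tau_\l))_{(j)}$ when $|\cH_0(P)|\leq m(\l)$. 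Next I would recall the discussion just after the definition, which checks that $\{\xi^{\mbox{{\tiny HLR}}}_\l\}$ is a transformation function family in the sense of \eqref{condtransfo} (monotonicity in $\l$ because averaging the $m(\l)$ largest of the $F_i(t)$ is nonincreasing in $\l$; monotonicity in $t$ because each $F_i$ is nondecreasing; continuity in $t$ under \eqref{cont} via the $1$-Lipschitz property of the top-$k$ average). Consequently the critical values $\tau^{\mbox{{\tiny HLR}}}_\l$ defined by the inversion \eqref{tau:HLR}$=$\eqref{inversecritvalues} are well-defined and nondecreasing, and by construction $\xi^{\mbox{{\tiny HLR}}}_\l(\tau^{\mbox{{\tiny HLR}}}_\l)\leq\zeta$ for each $\l$ (here, in the discrete case, one uses that $\mathcal{A}$ contains the relevant jump points so that the supremum defining $\tau^{\mbox{{\tiny HLR}}}_\l$ is attained and the value at $\tau^{\mbox{{\tiny HLR}}}_\l$ does not exceed $\zeta$). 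Combining, $B(\tau^{\mbox{{\tiny HLR}}},\alpha)\leq\zeta$, and then \eqref{toolbound} — valid under \eqref{Indep0} — gives $\sup_{P\in\mathcal{P}}\FDX_\alpha(\HLR,P)\leq\zeta$, which is \eqref{FDXcontrol}.

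\medskip

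\emph{Domination of $\LR$ under \eqref{superunif}.} Under \eqref{superunif} we have $F_i(t)\leq t$ for all $i$ and all $t$, hence $\sum_{j=1}^{m(\l)}(F(t))_{(j)}\leq m(\l)\,t$, so $\xi^{\mbox{{\tiny HLR}}}_\l(t)\leq \frac{m(\l)}{\lfloor\alpha\l\rfloor+1}t=\xi^{\mbox{{\tiny LR}}}_\l(t)$ for every $t\in[0,1]$ and every $\l$. By the definition of the inverse \eqref{inversecritvalues}, a pointwise-smaller transformation function produces a (weakly) larger critical value; the only subtlety is that $\tau^{\mbox{{\tiny HLR}}}_\l$ is obtained as a max over $t\in\mathcal{A}$ whereas $\tau^{\mbox{{\tiny LR}}}_\l$ is a max over $t\in[0,1]$, so I would argue that the $\LR$ rejection set is, with $P$-probability $1$, unchanged if one restricts the $\LR$ critical values to $\mathcal{A}$ — because in case \eqref{discrete} the $p$-values lie in $\mathcal{A}$ almost surely, so only the values of $\tau_\l$ "rounded down into $\mathcal{A}$" matter for the comparison $p_{\sigma(\l)}\leq\tau_\l$. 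Once both procedures are viewed through critical values in $\mathcal{A}$, $\tau^{\mbox{{\tiny LR}}}_\l\wedge(\text{rounding})\leq\tau^{\mbox{{\tiny HLR}}}_\l$ for all $\l$, and a larger nondecreasing critical value sequence yields, via \eqref{SDrejectionsetR}–\eqref{SDrewrite}, a larger $\widehat\l$ and hence a larger rejection set. This gives the second bullet.

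\medskip

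\emph{Main obstacle.} The substantive content (the Markov bound and the transformation-family checks) is already done in the surrounding text, so the only genuinely delicate point is the "almost surely" bookkeeping in case \eqref{discrete}: one must be careful that comparing $\tau^{\mbox{{\tiny HLR}}}_\l$ (a max over $\mathcal{A}$) with $\tau^{\mbox{{\tiny LR}}}_\l$ (a max over $[0,1]$) is legitimate for comparing rejection sets, using that the $p$-values concentrate on $\mathcal{A}$. I expect this to be a short but slightly technical lemma-style argument rather than a real difficulty; everything else is assembling pieces already in place.
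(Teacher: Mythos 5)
Your proposal is correct and follows exactly the route the paper intends: the first bullet is the Markov bound \eqref{eq:BoundLR} combined with \eqref{toolbound} and the inversion \eqref{tau:HLR}, and the second bullet is the pointwise comparison $\xi^{\mbox{{\tiny HLR}}}_\l\leq\xi^{\mbox{{\tiny LR}}}_\l$ under \eqref{superunif} together with the almost-sure restriction of the $p$-values to $\mathcal{A}$, which is precisely the ``technical detail'' the paper flags. No gaps beyond the level of detail the paper itself leaves implicit.
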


\subsection{Poisson-binomial procedure}\label{sec:HPB}

Here, we propose to bound \eqref{equB} by using the Poisson-binomial distribution. To this end, recall that the Poisson-Binomial distribution of parameters $\pi=(\pi_i)_{1\leq i \leq n}\in[0,1]^n$, denoted $\PBin[\pi]$ below,  corresponds to the distribution of $\sum_{i=1}^n \varepsilon_i$, where the $\varepsilon_i$ are all independent and each $\varepsilon_i$ follows a Bernoulli distribution of parameter $\pi_i$ for $1\leq i\leq n$. 

First note that for all $i\in\cH_0(P)$ and $t\in[0,1]$, we have that $\mathds{1}\{p_i(X)\leq t\}$ is stochastically upper bounded by a Bernoulli variable of parameter $F_i(t)$, see \eqref{equ:Fi}. As a consequence, 
by assuming \eqref{Indep}, we have for all critical values $(\tau_\l)_{1\leq \l\leq m}$,
\begin{align}
B(\tau,\alpha)
&\leq \sup_{1\leq \l\leq m}\sup_{\substack{A\subset \{1,\dots,m\}\\ |A|\leq m(\l)}} \P\left(   \PBin\left[(F_i(\tau_\l))_{i\in A}\right] \geq \lfloor\alpha\l\rfloor+1\right)\nonumber\\
&= \sup_{1\leq \l\leq m} \P\left(   \PBin\left[((F(\tau_\l))_{(j)})_{1\leq j\leq m(\l)}\right] \geq \lfloor\alpha\l\rfloor+1\right).\label{equ:Bpb}
 \end{align}
Bounding the latter by $\zeta$ leads to the following procedure.

\begin{definition} 
The Poisson-binomial procedure, denoted by $\PB$, is defined as the step-down procedure using the critical values 
\begin{align}
\tau^{\mbox{{\tiny PB}}}_\l & = \max\{t\in \mathcal{A}\::\: \xi^{\mbox{{\tiny PB}}}_\l(t)\leq \zeta\} ,\:\:\: 1\leq \l\leq m;\label{tau:PB}\\
\xi^{\mbox{{\tiny PB}}}_\l(t)&=  \P\left( \PBin\left[((F(t))_{(j)})_{1\leq j\leq m(\l)}\right] \geq \lfloor \alpha \l\rfloor+1\right), \:\:\:1\leq \l\leq m,\:t\in[0,1],\label{xi:PB}
\end{align}
where $(F(t))_{(1)}\geq \dots \geq (F(t))_{(m)}$ denotes the values of $\{F_i(t),1\leq i\leq m\}$ ordered decreasingly  and $m(\l)$ is defined by \eqref{equml}.
\end{definition}

Let us now check that $\{\xi^{\mbox{{\tiny PB}}}_\l,1\leq \l\leq m\}$ is a  transformation function family, that is, satisfy \eqref{condtransfo}. The continuity assumption holds because, under \eqref{cont}, the mapping $t\in[0,1]\mapsto ((F(t))_{(j)})_{1\leq j\leq m(\l)}$ is continuous (argument similar to above) and 
the cumulative distribution function of $\PBin[\pi]$ is a continuous function of $\pi\in[0,1]^n$. The monotonic property $\xi^{\mbox{{\tiny HGR}}}_{\l+1}(t)\leq \xi^{\mbox{{\tiny HGR}}}_\l(t)$ comes from the fact that 
$
\P( \PBin\left[((F(t))_{(j)})_{1\leq j\leq m-\l+i}\right]  \geq i) =\P( \PBin\left[(1-(F(t))_{(j)})_{1\leq j\leq m-\l+i}\right]   \leq m-\l) 
$ 
is non-increasing both in $i$ and $\l$.

Since under \eqref{superunif}, the distribution $ \PBin\left[((F(t))_{(j)})_{1\leq j\leq m(\l)}\right] $ is stochastically smaller than the distribution $ \Bin[m(\l), t] $, the following holds.

\begin{proposition} \label{prop:HPB}
In the setting defined in Section~\ref{sec:setting}, the procedure $\PB$ satisfies the following
\begin{itemize}
\item[\textbullet] Under \eqref{Indep}, $\PB$ controls the $\FDX$ in the sense \eqref{FDXcontrol};
\item[\textbullet] Under \eqref{superunif}, the set of nulls rejected by $\PB$ contains the one of $\GR$ with $P$-probability $1$, for all $P\in\mathcal{P}$.
\end{itemize}
\end{proposition}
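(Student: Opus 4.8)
## Proof Proposal for Proposition~\ref{prop:HPB}

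The plan is to establish the two bullet points separately, reusing the machinery already set up for transformation function families. For the first bullet (FDX control under \eqref{Indep}), the skeleton is essentially prescribed by the discussion preceding the statement: the functions $\{\xi^{\mbox{{\tiny PB}}}_\l, 1\leq \l\leq m\}$ have already been checked to satisfy \eqref{condtransfo}, so they form a transformation function family and the critical values $\tau^{\mbox{{\tiny PB}}}_\l$ are legitimately defined by \eqref{tau:PB} via the inversion \eqref{inversecritvalues}. First I would invoke \eqref{toolbound}, valid since \eqref{Indep} implies \eqref{Indep0}, to get $\sup_{P\in\mathcal{P}}\FDX_\alpha(\PB,P)\leq B(\tau^{\mbox{{\tiny PB}}},\alpha)$. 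Then I would reproduce the chain \eqref{equ:Bpb}: for $i\in\cH_0(P)$ and any fixed $t$, $\mathds{1}\{p_i(X)\leq t\}$ is stochastically dominated by a Bernoulli$(F_i(t))$ variable (by definition \eqref{equ:Fi}), and under \eqref{Indep} these indicators are independent over $i\in\cH_0(P)$, so their sum is stochastically dominated by $\PBin[(F_i(\tau_\l))_{i\in\cH_0(P)}]$; maximizing the tail probability over admissible index sets $A$ with $|A|\leq m(\l)$ picks out the $m(\l)$ largest values $(F(\tau_\l))_{(j)}$, which gives $B(\tau^{\mbox{{\tiny PB}}},\alpha)\leq \sup_\l \xi^{\mbox{{\tiny PB}}}_\l(\tau^{\mbox{{\tiny PB}}}_\l)$. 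Finally, by construction of $\tau^{\mbox{{\tiny PB}}}_\l$ as the largest $t\in\mathcal{A}$ with $\xi^{\mbox{{\tiny PB}}}_\l(t)\leq\zeta$ — together with right-continuity of $\xi^{\mbox{{\tiny PB}}}_\l$ in the discrete case and continuity in the continuous case — one has $\xi^{\mbox{{\tiny PB}}}_\l(\tau^{\mbox{{\tiny PB}}}_\l)\leq\zeta$ for each $\l$, hence $B(\tau^{\mbox{{\tiny PB}}},\alpha)\leq\zeta$, which is \eqref{FDXcontrol}.

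For the second bullet I would argue that $\GR$ is itself a step-down procedure whose critical values arise from the transformation family $\{\xi^{\mbox{{\tiny GR}}}_\l\}$ of \eqref{xi:GR}, and that under \eqref{superunif} we have the pointwise inequality $\xi^{\mbox{{\tiny PB}}}_\l(t)\leq\xi^{\mbox{{\tiny GR}}}_\l(t)$ for every $t\in[0,1]$ and every $\l$. This is exactly the stochastic-domination remark stated just above the proposition: since each $(F(t))_{(j)}\leq t$ under \eqref{superunif}, the Poisson-binomial law $\PBin[((F(t))_{(j)})_{1\leq j\leq m(\l)}]$ is stochastically smaller than $\Bin[m(\l),t]$, so its upper-tail probability at $\lfloor\alpha\l\rfloor+1$ is no larger, i.e. $\xi^{\mbox{{\tiny PB}}}_\l(t)\leq\xi^{\mbox{{\tiny GR}}}_\l(t)$. (One should note that the stochastic domination of $\PBin$ by the binomial when the success probabilities are pointwise smaller is a standard fact — it follows from a coupling of the Bernoulli coordinates — and I would cite it rather than re-prove it.) The pointwise comparison of the $\xi$-functions then forces $\tau^{\mbox{{\tiny PB}}}_\l\geq\tau^{\mbox{{\tiny GR}}}_\l$ for all $\l$ through the monotonicity of the inversion \eqref{inversecritvalues}, and a step-down procedure with uniformly larger critical values rejects a superset — this last implication is a monotonicity property of step-down procedures that can be read directly from the definition \eqref{SDrejectionsetR}--\eqref{SDrewrite} of $\wh\l$. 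The $P$-almost-sure qualifier enters only because $\mathcal{A}$ may differ from $[0,1]$ in case \eqref{discrete}, so the comparison $\tau^{\mbox{{\tiny PB}}}_\l\geq\tau^{\mbox{{\tiny GR}}}_\l$ (the $\GR$ critical values live on $[0,1]$ while ours are restricted to $\mathcal{A}$) only governs rejections at $p$-values lying in $\mathcal{A}$, which happens with probability one.

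I expect the genuinely delicate point to be the very first bullet's reduction — specifically the justification that $\xi^{\mbox{{\tiny PB}}}_\l(\tau^{\mbox{{\tiny PB}}}_\l)\leq\zeta$, i.e. that the supremum defining $\tau^{\mbox{{\tiny PB}}}_\l$ is attained (or that the closure argument works) so that the bound $B\leq\zeta$ is not merely a non-strict limit. In case \eqref{discrete} this is immediate because $\mathcal{A}$-restricted maxima over the finite relevant set are attained and $\xi^{\mbox{{\tiny PB}}}_\l$ is right-continuous; in case \eqref{cont} it follows from continuity of $\xi^{\mbox{{\tiny PB}}}_\l$ on the compact $[0,1]$. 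Beyond that, the proof is largely bookkeeping: one must be careful that the index set $A$ over which the first supremum in \eqref{equ:Bpb} runs is allowed to be any subset of size at most $m(\l)$ (not just subsets of some fixed $\cH_0$), which is what legitimizes replacing $\sum_{i\in\cH_0(P)}$ by the sum of the $m(\l)$ largest $F_i(\tau_\l)$. Everything else — that \eqref{Indep} is what makes the Poisson-binomial (rather than merely the Markov/Poisson-binomial-free) bound available, and that all three transformation-family axioms hold — has already been verified in the text preceding the statement, so I would simply reference those verifications rather than repeat them.
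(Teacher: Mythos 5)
Your proposal is correct and follows essentially the same route as the paper, whose ``proof'' of this proposition is in fact distributed over the text of Section~\ref{sec:HPB}: the bound \eqref{equ:Bpb} obtained from Bernoulli domination plus independence under \eqref{Indep}, the verification that the $\xi^{\mbox{{\tiny PB}}}_\l$ satisfy \eqref{condtransfo}, and the stochastic comparison of $\PBin\left[((F(t))_{(j)})_{1\leq j\leq m(\l)}\right]$ with $\Bin[m(\l),t]$ under \eqref{superunif}. The only cosmetic caveat is that in case \eqref{discrete} the inequality $\tau^{\mbox{{\tiny PB}}}_\l\geq\tau^{\mbox{{\tiny GR}}}_\l$ need not hold literally (the two maxima range over different sets), so the superset claim is cleanest via the pointwise comparison $\xi^{\mbox{{\tiny PB}}}_\l\leq\xi^{\mbox{{\tiny GR}}}_\l$ and the shortcut of Proposition~\ref{prop:equivSD} --- exactly the almost-sure qualification you already note.
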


However, in general, the procedure $\PB$ is computationally demanding, even with the shortcut mentioned in Section~\ref{sec:shortcut}. This comes from the computation of $\xi^{\mbox{{\tiny PB}}}_\l(t)$ which involves the distribution function of a Poisson-binomial variable. 
In the next section, we make $\PB$ slightly more conservative for recovering the computational price of $\GR$.

\subsection{Heterogeneous Guo-Romano procedure}\label{sec:HGR}

In this section, we further upper-bound \eqref{equ:Bpb} by using that any $\PBin\left[(\pi_i)_{1\leq i \leq n}\right]$ random variable is stochastically upper-bounded by 
a $\Bin\left[n, 1- \left(\prod_{i=1}^n (1-\pi_i )\right)^{1/n}\right]$ random variable (see Example 1.A.25 in \cite{Shaked}). This yields
\begin{align}
B(\tau,\alpha)
&\leq \sup_{1\leq \l\leq m}  \P\left(   \Bin\left[m(\l),
\tilde{F}_{m(\l)}(\tau_\l)
 \right] \geq \lfloor\alpha\l\rfloor+1\right) ,
\label{equ:Bpb}
 \end{align}
where we let 
\begin{align}\label{equFjt}
\tilde{F}_{j}(t) =1-\left(\prod_{j'=1}^{j} (1-(F(t))_{(j')})\right)^{1/j},  \:\:\:1\leq j\leq m,\:t\in[0,1],
 \end{align}
where $(F(t))_{(1)}\geq \dots \geq (F(t))_{(m)}$ denotes the values of $\{F_i(t),1\leq i\leq m\}$ ordered decreasingly.

This reasoning suggests another heterogeneous procedure, based on the binomial distribution.
Since $\GR$ also uses the binomial device, we name this new procedure  the heterogeneous Guo-Romano procedure.

\begin{definition} 
The heterogeneous Guo-Romano procedure, denoted by $\HGR$, is defined as the step-down procedure using the critical values defined by
\begin{align}
\tau^{\mbox{{\tiny HGR}}}_\l & = \max\{t\in \mathcal{A}\::\: \xi^{\mbox{{\tiny HGR}}}_\l(t)\leq \zeta\} ,\:\:\: 1\leq \l\leq m;\label{tau:HGR}\\
\xi^{\mbox{{\tiny HGR}}}_\l(t)&=  \P\left(  \Bin\left[m(\l),
\tilde{F}_{m(\l)}(t) \right] \geq \lfloor \alpha \l\rfloor+1\right), \:\:\:1\leq \l\leq m,\:t\in[0,1],\label{xi:HGR}
\end{align}
where $\tilde{F}_{j}(t)$ is defined in \eqref{equFjt}  and $m(\l)$ is defined by \eqref{equml}.
\end{definition}

The condition \eqref{condtransfo} also holds in that case. However, the proof of monotonicity of $\xi^{\mbox{{\tiny HGR}}}_\l(t)$ is slightly more involved than above and is deferred to Lemma~\ref{lem:monotoneHGRstar}. 
In addition, since  under \eqref{superunif} we have $\tilde{F}_{m(\l)}(t) \leq t$, we deduce that $\HGR$, although more conservative than $\PB$, is still a uniform improvement over $\GR$. 

\begin{proposition} \label{prop:HGR}
In the setting defined in Section~\ref{sec:setting}, the procedure $\HGR$ satisfies the following
\begin{itemize}
\item[\textbullet] Under \eqref{Indep}, $\HGR$ controls the $\FDX$ in the sense \eqref{FDXcontrol};
\item[\textbullet] Under \eqref{superunif}, the set of nulls rejected by $\HGR$ contains the one of $\GR$ with $P$-probability $1$, for all $P\in\mathcal{P}$.
\end{itemize}
\end{proposition}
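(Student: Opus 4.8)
The plan is to follow the same template already used for \HLR{} and \PB{}, since \HGR{} is obtained by chaining one more stochastic-domination step on top of the Poisson-binomial bound. Concretely, the first bullet (FDX control under \eqref{Indep}) reduces to checking that the functions $\xi^{\mbox{{\tiny HGR}}}_\l$ form a transformation function family satisfying \eqref{condtransfo} and that the chain of inequalities \eqref{toolbound}, \eqref{equ:Bpb} holds, so that $B(\tau^{\mbox{{\tiny HGR}}},\alpha)\leq \sup_\l \xi^{\mbox{{\tiny HGR}}}_\l(\tau^{\mbox{{\tiny HGR}}}_\l)\leq \zeta$. The validity of \eqref{toolbound} under \eqref{Indep0} is already granted (Theorem~\ref{maintool}), and \eqref{Indep} implies \eqref{Indep0}; the passage from \eqref{equB} to the Poisson-binomial bound uses that under \eqref{Indep} the indicators $\mathds{1}\{p_i(X)\leq\tau_\l\}$, $i\in\cH_0(P)$, are independent and each stochastically dominated by a Bernoulli$(F_i(\tau_\l))$, and the final passage to the binomial bound is exactly Example~1.A.25 in \cite{Shaked} applied with $\pi_i=(F(\tau_\l))_{(i)}$. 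Then \eqref{inversecritvalues} with $\xi_\l=\xi^{\mbox{{\tiny HGR}}}_\l$ gives $\xi^{\mbox{{\tiny HGR}}}_\l(\tau^{\mbox{{\tiny HGR}}}_\l)\leq\zeta$ for every $\l$, which closes the first bullet.

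For the transformation-function-family check, monotonicity of $\xi^{\mbox{{\tiny HGR}}}_\l$ in $t$ for fixed $\l$ follows because $t\mapsto\tilde F_{m(\l)}(t)$ is non-decreasing (each $(F(t))_{(j')}$ is non-decreasing in $t$, hence each factor $1-(F(t))_{(j')}$ is non-increasing, hence the product is non-increasing, hence $1$ minus its $1/m(\l)$-th root is non-decreasing) and $p\mapsto\P(\Bin[n,p]\geq k)$ is non-decreasing in $p$. Continuity under \eqref{cont} is inherited from continuity of $t\mapsto(F_i(t))_i$, of the order statistics, of the geometric-mean map, and of the binomial tail in its probability parameter. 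The one genuinely delicate point is the inter-level monotonicity $\xi^{\mbox{{\tiny HGR}}}_{\l+1}(t)\leq\xi^{\mbox{{\tiny HGR}}}_\l(t)$: here both the binomial size $m(\l)$ and the rejection threshold $\lfloor\alpha\l\rfloor+1$ change with $\l$, and the probability parameter $\tilde F_{m(\l)}(t)$ also changes because it is a geometric mean over a different number of (the largest) values $(F(t))_{(j')}$. This is exactly why the excerpt defers it to Lemma~\ref{lem:monotoneHGRstar}; I would simply invoke that lemma here, the key structural fact being that $m(\l+1)=m(\l)-1+(\lfloor\alpha(\l+1)\rfloor-\lfloor\alpha\l\rfloor)$ so that going from $\l$ to $\l+1$ either decreases $m(\l)$ by one while keeping $\lfloor\alpha\l\rfloor+1$ fixed, or keeps $m(\l)$ fixed while increasing $\lfloor\alpha\l\rfloor+1$ by one, and in both regimes the tail probability of a $\Bin[m(\l),\tilde F_{m(\l)}(t)]$ with threshold $\lfloor\alpha\l\rfloor+1$ can only decrease — using also that dropping the smallest of the $m(\l)$ largest $F_i(t)$ values only decreases the geometric mean.

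For the second bullet, under \eqref{superunif} we have $(F(t))_{(j')}\leq t$ for all $j'$, hence $\tilde F_{m(\l)}(t)=1-\big(\prod_{j'=1}^{m(\l)}(1-(F(t))_{(j')})\big)^{1/m(\l)}\leq 1-\big((1-t)^{m(\l)}\big)^{1/m(\l)}=t$; therefore $\Bin[m(\l),\tilde F_{m(\l)}(t)]$ is stochastically dominated by $\Bin[m(\l),t]$, giving $\xi^{\mbox{{\tiny HGR}}}_\l(t)\leq\xi^{\mbox{{\tiny GR}}}_\l(t)$ pointwise. Consequently, for each $\l$ the set $\{t\in\mathcal A:\xi^{\mbox{{\tiny HGR}}}_\l(t)\leq\zeta\}$ contains $\{t\in\mathcal A:\xi^{\mbox{{\tiny GR}}}_\l(t)\leq\zeta\}$, so $\tau^{\mbox{{\tiny HGR}}}_\l\geq\tau^{\mbox{{\tiny GR}}}_\l$ whenever the critical values of \GR{} are computed over the same grid $\mathcal A$; the only subtlety, as already flagged for \HLR{}, is that the GR critical values of \eqref{crit:GR} range over $[0,1]$ rather than $\mathcal A$, so the inclusion of rejection sets holds only with $P$-probability $1$ (on the event that all observed $p$-values lie in $\mathcal A$, which has probability $1$ under \eqref{discrete} and trivially under \eqref{cont}). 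Finally, a larger critical value vector yields a larger step-down rejection set — a monotonicity of the map $\tau\mapsto R$ that is immediate from \eqref{SDrejectionsetR}--\eqref{SDrewrite}, or alternatively read off from the shortcut \eqref{Rshortcut}--\eqref{equadjpvalues} since $\xi^{\mbox{{\tiny HGR}}}_\l\leq\xi^{\mbox{{\tiny GR}}}_\l$ makes every adjusted $p$-value $\tilde p_i$ no larger for \HGR{} than for \GR{} — which gives the claimed containment. The main obstacle is thus entirely concentrated in the inter-level monotonicity of $\xi^{\mbox{{\tiny HGR}}}_\l$, i.e.\ Lemma~\ref{lem:monotoneHGRstar}; everything else is a routine reprise of the \HLR{}/\PB{} arguments.
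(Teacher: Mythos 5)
Your proposal is correct and follows essentially the same route as the paper: the FDX control comes from chaining Theorem~\ref{maintool}, the Poisson-binomial bound under \eqref{Indep}, and the Shaked--Shanthikumar binomial domination to get $B(\tau,\alpha)\leq\sup_\l\xi^{\mbox{{\tiny HGR}}}_\l(\tau^{\mbox{{\tiny HGR}}}_\l)\leq\zeta$, with the inter-level monotonicity delegated to Lemma~\ref{lem:monotoneHGRstar}, and the domination over $\GR$ comes from $\tilde F_{m(\l)}(t)\leq t$ under \eqref{superunif} exactly as in the text preceding the proposition. The only cosmetic remark is that your parenthetical heuristic for the lemma slightly understates its difficulty (the point is precisely that $\tilde F_{j+1}(t)\leq\tilde F_j(t)$ works \emph{against} the desired stochastic ordering, which is why the lemma reuses Example~1.A.25 of \cite{Shaked}), but since you invoke the lemma rather than reprove it, this does not affect correctness.
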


\begin{remark}
The numerical results in Sections \ref{sec:weighting} and \ref{sec:discrete} suggest that the conservatism of $\HGR$ with respect to $\PB$ is usually quite small. In addition, since the computational effort required by $\HGR$ is comparable to that of $\GR$, the gain in efficiency may be great, especially for large $m$. We therefore think that $\HGR$ may be especially useful for very high dimensional heterogeneous data.
\end{remark}

\begin{remark}
We can also define a non-adaptive version of $\HGR$, defined as the step-down procedure of critical values \eqref{inversecritvalues} based on the transformation functional
\begin{align*}
\xi_\l(t)&=  \P\left(  \Bin\left[m,
\tilde{F}(t) \right] \geq \lfloor \alpha \l\rfloor+1\right), \:\:\:1\leq \l\leq m,\:t\in[0,1],
\end{align*}
where $\tilde{F}(t) =1-\left(\prod_{j=1}^{m} (1-F_j(t))\right)^{1/m}$. While being more conservative than $\HGR$, it still controls the $\FDX$ in the sense \eqref{FDXcontrol} .
So, while controlling the FDR is linked to the arithmetic average of the $F_i$'s  (Heyse's procedure, see text below \eqref{equFbar}), this shows that controlling the FDX is linked to  geometric averaging. In addition, this shows that the situation is even more simple for FDX, because no further correction is needed here, whereas the arithmetic average should be slightly modified in order to yield a rigorous FDR control \cite{DDR2018}.
\end{remark}

\section{Application to weighting}\label{sec:weighting}

It is well known that $p$-value weighting can improve the power of multiple testing procedures, see, e.g., \cite{GRW2006,RW2009,Ign2016,Dur2017,Ramdas2017} and references therein. However, to the best of our our knowledge, except for the augmentation approach described in \cite{GRW2006}, no methods are available that incorporate weighting for FDX control.  We show in this section that such methods can be obtained directly from the bounds on $B(\tau, \alpha)$ introduced in Section~\ref{sec:NewFDXProcedures}.

Throughout this section, we assume that we have at hand a $p$-value family satisfying \eqref{cont} and \eqref{superunif}. As explained in our introduction section (see references therein), while the null distributions of the $p$-values are  typically uniform, the point is that they can have heterogeneous alternative distributions, so that it could be desirable to weigh the $p$-values in some way.
For this, we consider a fixed weight vector $(w_i)_{1\leq i \leq m }\in\R_+^m$. The ordered weights are  denoted $w_{(1)}\geq w_{(2)}\geq \dots \geq w_{(m)}$, the average weight is denoted $\overline{w}=m^{-1}\sum_{i=1}^m w_i$ and the average over the $j$ largest weights is denoted by $\overline{w}_j=j^{-1}\sum_{j'=1}^{j} w_{(j')}$.

Since the heterogeneous procedures $\HLR$, $\PB$ and $\HGR$ introduced in Section~\ref{sec:NewFDXProcedures} yield valid control for any collection of distribution functions $\{F_i,1\leq i\leq m\}$, it is possible to use very flexible weighting schemes. In order to limit the scope of this paper, we consider only two simple types of weighting approaches in more detail:
\begin{itemize}
	\item for \emph{arithmetic mean weighting} (abbreviated in what follows as AM), we define the weighted $p$-value family as
	\begin{equation}\label{weightedpvaluesLR}
	p^w_i
	=p_i \:\bar w/w_i, \:\:\:1\leq i\leq m.
	\end{equation}
	The weighted $p$-values thus have  the heterogeneous distribution functions 
	\begin{align}
	\FAM_i(t) &= \left(\frac{w_i}{\bar w} t\right) \wedge 1, \:\:\:1\leq i\leq m,
	\end{align}
	under the null. This corresponds to classical weighting approaches established for FWER and FDR control.
	\item 
	for \emph{geometric mean weighting} (abbreviated as GM), 
	we define 
	\begin{equation}\label{weightedpvaluesGR}
	p^w_i
	=1-(1-p_i)^{\bar w/w_i}, \:\:\:1\leq i\leq m.
	\end{equation}
	The weighted $p$-values therefore have  the following heterogeneous distribution functions under the null:
	\begin{align}
	\FGM_i(t) &= 1-(1-t)^{w_i/\bar w}, \:\:\:1\leq i\leq m.
	\end{align}
\end{itemize}
Thus, combining these two weighting approaches with the three heterogeneous procedures introduced in the previous section yields a total of six weighted procedures which we discuss in more detail below. Note that  a Taylor expansion provides $\FAM_i(t)  \approx \FGM_i(t)$ for small values of $t$. Therefore, we expect that AM and GM procedures will yield similar rejection sets for small $p$-values.

\subsection{Weighted Lehmann-Romano procedures}

Using \eqref{toolbound} and \eqref{eq:BoundLR} (Markov device), we get that any step-down procedure using the weighted $p$-values \eqref{weightedpvaluesLR} and critical values $(\tau_i)_{1\leq i \leq m}$ has a FDX smaller than or equal to
\begin{align*}
\frac{1}{\lfloor \alpha \l\rfloor+1} \sum_{j=1}^{m(\l)} (\FAM(t))_{(j)}
= \frac{1}{\lfloor \alpha \l\rfloor+1} \sum_{j=1}^{m(\l)} \left( \left(\frac{w_{(j)}}{\bar w} t \right)\wedge 1 \right)
\leq \frac{m(\l)}{\lfloor \alpha \l\rfloor+1} \times \frac{\overline{w}_{m(\l)}}{\bar w} t =: \xi^{\mbox{{\tiny wLR-AM}}}_\l(t).
\end{align*}
Since the $\xi^{\mbox{{\tiny wLR-AM}}}_\l$ form a transformation function family, bounding the latter by $\zeta$ leads to an FDX controlling procedure, that we call the AM-weighted Lehmann-Romano procedure, denoted by $\WLRAM$ in the sequel. It thus corresponds to the step-down procedure using the weighted $p$-values \eqref{weightedpvaluesLR} and the critical values 
\begin{align*}
\tau^{\mbox{{\tiny wLR-AM}}}_\l & = \zeta \frac{\lfloor\alpha\l\rfloor+1}{\sum_{j=1}^{m(\l)} w_{(j)}} \bar w=\tau^{\mbox{{\tiny LR}}}_\l \cdot \frac{\bar w}{\overline{w}_{m(\l)}}  , \qquad \:\: 1\leq \l\leq m. 
\end{align*}
In particular, if the weight vector is uniform, that is, $w_i=1$ for all $i$, then  $\WLRAM$ reduces to $\LR$. 

Similarly to above, using the GM weighting \eqref{weightedpvaluesGR} gives an FDX smaller than or equal to
\begin{align*}
	 \frac{1}{\lfloor \alpha \l\rfloor+1}\sum_{j=1}^{m(\l)} (\FGM(t))_{(j)} = \frac{1}{\lfloor \alpha \l\rfloor+1} \sum_{j=1}^{m(\l)} (1-(1-t)^{w_{(j)}/\bar w} )=:\xi^{\mbox{{\tiny wLR-GM}}}_\l(t) .
	\end{align*}
This gives rise to the GM-weighted Lehmann-Romano procedure, denoted $\WLRGM$, defined as the step-down procedure using the weighted $p$-values \eqref{weightedpvaluesGR} and the critical values 
\begin{align*}
\tau^{\mbox{{\tiny wLR-GM}}}_\l & = \max \{ t \in [0,1] : \xi^{\mbox{{\tiny wLR-GM}}}_\l(t)  \le \zeta \} , \:\: 1\leq \l\leq m.
	\end{align*}

In general, no domination relationship holds between $\WLRGM$ and $\WLRAM$.
Finally, again, in case of uniform weighting,   $\WLRGM$  reduces to $\LR$.

\subsection{Weighted Poisson-binomial procedures} \label{sec:WPB}
Applying the strategy of Section~\ref{sec:HPB} with the c.d.f. sets $\{\FAM_i,1\leq i\leq m\}$ and $\{\FGM_i,1\leq i\leq m\}$, we can use the two transformation function families given by
\begin{align*}
\xi^{\mbox{{\tiny wPB-AM}}}_\l(t)&=  \P\left( \PBin\left[\left(\left(\frac{w_{(j)}}{\bar w} t\right) \wedge 1\right)_{1\leq j\leq m(\l)}\right] \geq \lfloor \alpha \l\rfloor+1\right) , \:\:\:1\leq \l\leq m,\:t\in[0,1];\\
\xi^{\mbox{{\tiny wPB-GM}}}_\l(t)&=  \P\left( \PBin\left[(1-(1-t)^{w_{(j)}/\bar w}
)_{1\leq j\leq m(\l)}\right]\geq \lfloor \alpha \l\rfloor+1\right), \:\:\:1\leq \l\leq m,\:t\in[0,1],
\end{align*}
to define new step-down procedures, denoted $\WPBAM$ and $\WPBGM$ respectively, that both ensure FDX control.

\subsection{Weighted Guo-Romano procedures} \label{ssec:WeightedGuoRomano}

We apply here the strategy of Section~\ref{sec:HGR} for the c.d.f. sets $\{\FAM_i,1\leq i\leq m\}$ and $\{\FGM_i,1\leq i\leq m\}$. According to \eqref{equFjt}, let us define 
\begin{align*}
\tilde{F}^{\textnormal{AM}}_{j}(t) &=1-\left(\prod_{j'=1}^{j} \left((1-\left(\frac{w_{(j')}}{\bar w} t\right) \wedge 1 \right) \right)^{1/j},  \:\:\:1\leq j\leq m,\:t\in[0,1];\\
\tilde{F}^{\textnormal{GM}}_{j}(t) &=1-\left(\prod_{j'=1}^{j} (1-t)^{w_{(j')}/\bar w}
\right)^{1/j} = 1-(1-t)^{\overline{w}_{j} /\bar w
},  \:\:\:1\leq j\leq m,\:t\in[0,1].
\end{align*}
This gives rise to the transformation function families 
\begin{align*}
\xi^{\mbox{{\tiny wGR-AM}}}_\l(t)&=  \P\left(\Bin\left[m(\l),
\tilde{F}^{\textnormal{AM}}_{m(\l)}(t)\right] \geq \lfloor\alpha\l\rfloor+1\right) , \:\:\:1\leq \l\leq m,\:t\in[0,1];\\
\xi^{\mbox{{\tiny wGR-GM}}}_\l(t)&= \P\left(\Bin\left[m(\l),
\tilde{F}^{\textnormal{GM}}_{m(\l)}(t)\right] \geq \lfloor\alpha\l\rfloor+1\right), \:\:\:1\leq \l\leq m,\:t\in[0,1].
\end{align*}
Critical values $\tau^{\mbox{{\tiny wGR-AM}}}$ and $\tau^{\mbox{{\tiny wGR-GM}}}$  are obtained via \eqref{inversecritvalues} from families $\xi^{\mbox{{\tiny wGR-AM}}}$ and $\xi^{\mbox{{\tiny wGR-GM}}}$, respectively. This yields two new FDX controlling step-down procedures that are denoted by $\WGRAM$ and $\WGRGM$, respectively. 
Note that, similar to arithmetic weighting for the $\LR$ procedure, geometric weighting leads to a simple transformation of the original $\GR$ critical values, given by
\begin{align}\label{CriticalValues:WGM:GR}
\tau^{\mbox{{\tiny wGR-GM}}}_\l &= 1 - \left( 1- \tau^{\mbox{{\tiny GR}}}_\l \right)^{\bar w / \overline{w}_{m(\l)} }.
\end{align}
Thus, this particular procedure combines simplicity with a close relationship to the original Guo-Romano procedure. By contrast, as for the heterogeneous version, the weighted Poisson-binomial procedures require the evaluation of the Poisson-binomial distribution function which may be computationally demanding for large $m$. The weighted Guo-Romano procedures, on the other hand, while possibly sacrificing some power, only require evaluation of the standard binomial distribution.

\subsection{Analysis of RNA-Seq data}

We revisit an analysis of the RNA-Seq data set 'airway'  using results from the independent hypothesis weighting (IHW) approach (for details, see \cite{Ign2016} and the vignette accompanying its software implementation). Loosely speaking, this method aims to increase power by assigning a weight $w_i$ to each hypothesis and subsequently applying e.g. the Bonferroni or the Benjamini-Hochberg procedure $\BH$ to the weighted $p$-values while aiming for control of $\FWER$ or $\FDR$.

In what follows, we present some results for weighted $\FDX$ control, using the procedures introduced in Sections~\ref{sec:WPB} and~\ref{ssec:WeightedGuoRomano}. For this data set we have $m=64102$ and the weights $w_1, \ldots , w_m$ are taken from the output of the ihw function from the bioconductor package 'IHW'. For the sake of illustration  we assume the $p$-values to be independent. A large portion (about $45\%$) of these weights are 0, Figure \ref{fig:PlotIHWAnalysisWeights} presents a histogram of the (strictly) positive weights.

\begin{figure}[htbp]
	\centering
	\includegraphics[width=1.0\textwidth]{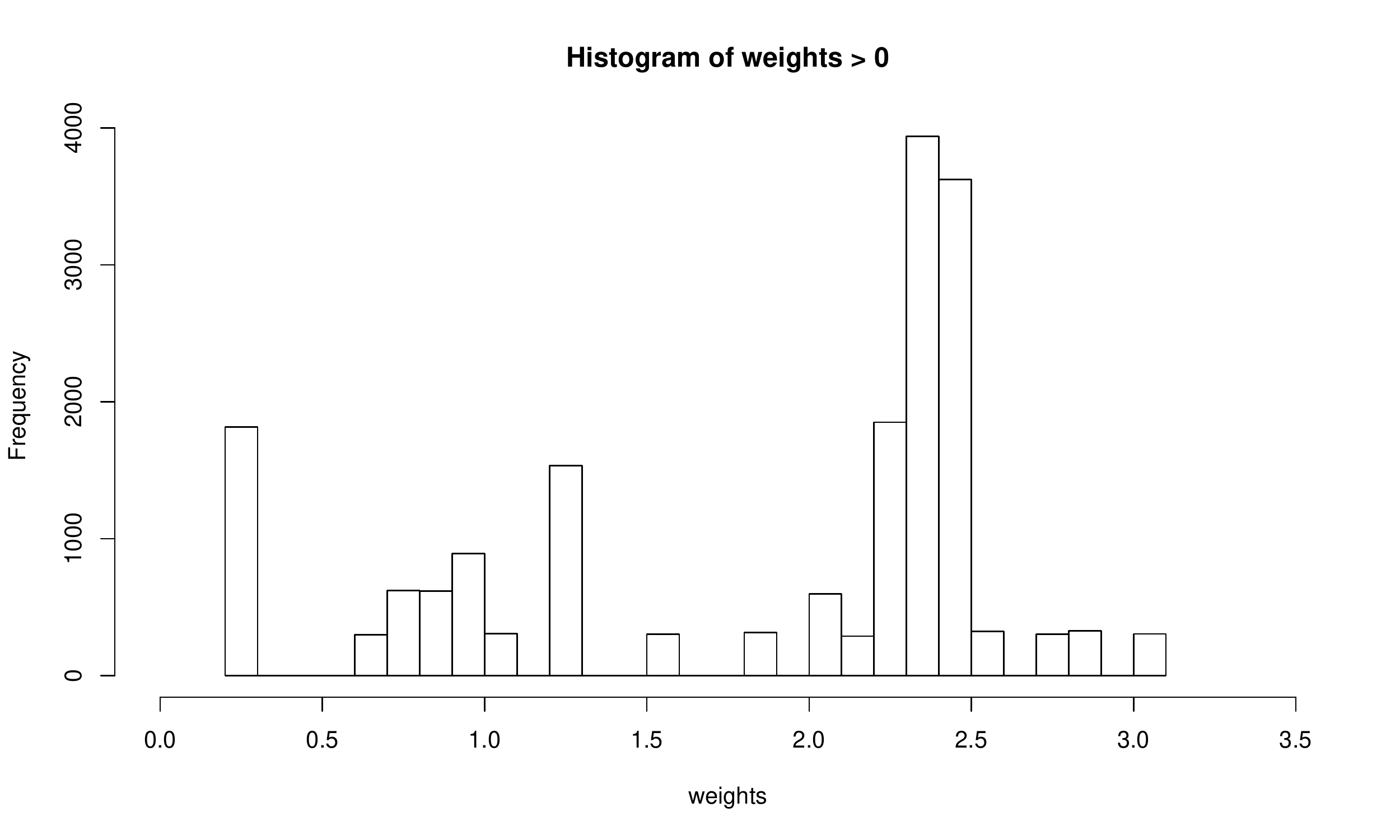}
	\caption{Histogram of positive weights generated by the ihw function for the airway data}
	\label{fig:PlotIHWAnalysisWeights}
\end{figure}

\begin{figure}[htbp]
	\centering
	\makebox{\includegraphics[width=1.0\textwidth]{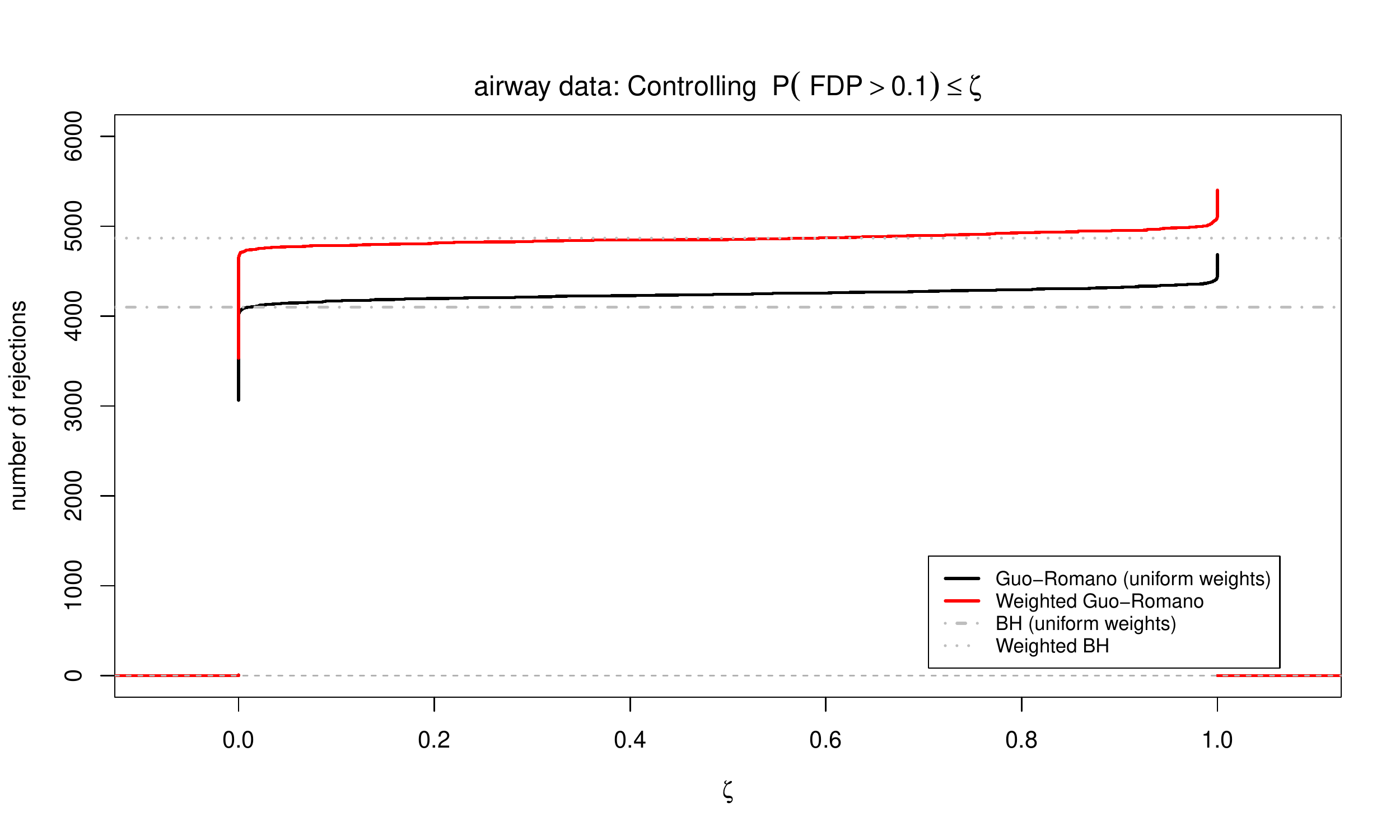}}
	\caption{Number of rejections ($y$-axis) for the airway data when using the $\GR$ and [wGR-GM] procedure. Both procedures control the tail probabilities (on the $x$-axis) for the $\FDP$ exceeding $10\%$. The horizontal lines represent the rejections of the BH and weighted BH procedures at $\FDR$-level $10\%$. }
	\label{fig:PlotIHWAnalysisFDXRejections}
\end{figure} 

Table \ref{tab:AirwayData} shows that controlling the mean (i.e. $\FDR$) or the median of the $\FDP$ leads to similar number of rejections.  
\begin{table}[htbp]
	\caption{Number of rejections  for the airway data. The FDR procedures control $\FDR$ at level $10\%$, the FDX procedures control $P(\FDP > 10\%) \le 0.5$. \label{tab:AirwayData}}
	\centering
	\begin{tabular}{lccccccc}
		\toprule
		\multicolumn{1}{c}{} &  $\BH$ 	&  [wBH] & $\GR$  		& $\WPBAM$ & $\WPBGM$ & $\WGRAM$ & $\WGRGM$
		\\
		\midrule
		Rejections  &  4099    & 4896    & 4243    & 4868 & 4865 & 4853 & 4852  \\
		\bottomrule
	\end{tabular}
\end{table}
For both error rates, incorporating weights leads to similar gains in power. For weighted FDX control, the more conservative weighted Guo-Romano procedures exhibit only a slight loss of power with respect to the weighted Poisson-binomial approaches. The difference between arithmetic and geometric weighting is negligible for this data.

Figure \ref{fig:PlotIHWAnalysisFDXRejections} indicates that for the $\FDX$ controlling procedures, the mapping of the confidence level  to the number of rejections is quite flat. This means that statements about the $\FDP$ can be made with high confidence without losing too much power. For instance, requiring that $\FDP \le 10\% $ with confidence at least $95\%$ still allows for 4145 and 4771 rejections using the $\GR$ and $\WGRGM$ procedures.

\section{Application to discrete tests}\label{sec:discrete}\label{sec:numexp}

\subsection{Discrete FDX procedures} Discrete FDX procedures can be defined in a straightforward way by directly using the distribution functions $F_1, \ldots, F_m$ of the discretely distributed $p$-values. The prototypical example we have in mind are multiple conditional tests like Fisher's exact test. In this case, discreteness and heterogeneity arise from conditioning on the observed table margins. We denote the resulting heterogeneous procedures from section \ref{sec:NewFDXProcedures} by $\DLR$ (for $\HLR$), $\DPB$ (for [HPB]) and $\DGR$ (for $\HGR$).

\subsection{Simulation study}\label{sec:simu}

We now investigate the power of the $\DLR$, $\DPB$ and $\DGR$ procedures in a simulation study similar to those described in \citep{Gilbert05}, \citep{Heller2012} and \citep{DDR2018}. We focus on comparing  the performance of the new discrete  procedures  to their continuous counterparts. Since the analysis with $\DPB$ is computationally demanding, we are also interested in investigating the performance of the slightly more conservative, but numerically more efficient  $\DGR$ procedure.  Finally, as above, we also include $\BH$ (Benjamini-Hochberg procedure) as a benchmark.

\subsubsection{Simulated Scenarios}

We simulate a two-sample problem in which a vector of $m$ independent binary responses (``adverse events") is observed for each subject in two groups, where each group consists of $N=25$ subjects. Then, the goal is to simultaneously test the $m$ null hypotheses $H_{0i}:$ ``$p_{1i}=p_{2i}$", $i=1,\ldots,m$, where $p_{1i}$ and $p_{2i}$ are the success probabilities for the $i$th binary response in group 1 and 2, respectively. Before we describe the simulation framework in more detail, we explain how this set-up leads to discrete and heterogeneous $p$-value distributions. Suppose we have simulated two vectors of dimension $m$ where each component represents a count in $ \{0,\ldots,25\}$. This data can be represented by $m$ contingency tables. Now each hypothesis is tested using Fisher's exact test (two-sided) for each contingency table, which is performed by conditioning on the (simulated) pair of marginal counts. Thus, we can determine for every contingency table $i$ the discrete distribution function $F_i$ of the $p$-values for Fisher's exact test under the null hypothesis. For differing (simulated) contingency tables, these induced distributions will generally be heterogeneous and our inference is conditionally on the marginal counts.

We take $m=800,2000$ where $m=m_{1}+m_{2}+m_{3}$ and data are generated so that the response is $Bernoulli(0.01)$ at $m_{1}$ positions for both groups, $Bernoulli(0.10)$ at $m_{2}$ positions for both groups and   $Bernoulli(0.10)$ at $m_{3}$ positions for group 1 and  $Bernoulli(q)$ at $m_{3}$ positions for group 2 where $q=0.15,0.25,0.4$ represents weak,  moderate and strong effects, respectively. 
The null hypothesis is true for the  $m_{1}$ and $m_{2}$ positions while the alternative hypothesis is true for the  $m_{3}$ positions. We also take different configurations for the proportion of false null hypotheses, $m_{3}$ is set to be $10\%$,  $30\%$ and  $80\%$ of the value of $m$, which represents  small, intermediate and large proportion of effects, respectively (the proportion of true nulls $\pi_{0}$ is $0.9$, $0.7$, $0.2$, respectively). Then, $m_{1}$ is set to be $20\%$,  $50\%$ and  $80\%$ of the number of true nulls (that is, $m-m_{3}$) and $m_{2}$ is taken accordingly as $m-m_{1}-m_{3}$. 

For each of the 54 possible parameter configurations specified by $m,m_{3},m_{1}$ and $q$, $10000$ Monte Carlo trials are performed, that is, $10000$ data sets are generated and for each data set, an unadjusted two-sided $p$-value from Fisher's exact test is computed for each of the $m$ positions, and the multiple testing procedures mentioned above are applied at level $\alpha=0.05$. The power of each procedure was estimated as the fraction of the $m_{3}$ false null hypotheses that were rejected, averaged over the $10000$ simulations (TDP, true discovery proportion). 
Note that while our procedures are designed to control the FDP conditionally on the marginal counts, our power results are presented in an unconditional way for the sake of simplicity. 
For random number generation the R-function \textit{rbinom} was used. The two-sided $p$-values from Fisher's exact test were computed using the R-function \textit{fisher.test}. 

\subsubsection{Results}
Table~\ref{tab:addlabel} in Appendix~\ref{sec:appendix} shows that the (average) power of the compared procedures depends primarily on the strength of the signal $q_3 \in \{0.15, 0.25, 0.4\}$. More specifically, Figure \ref{fig:BoxplotsPower} contains some typical plots of the simulation results.

\begin{figure}[htbp]
	\centering
	\makebox{\includegraphics[width=1.0\textwidth]{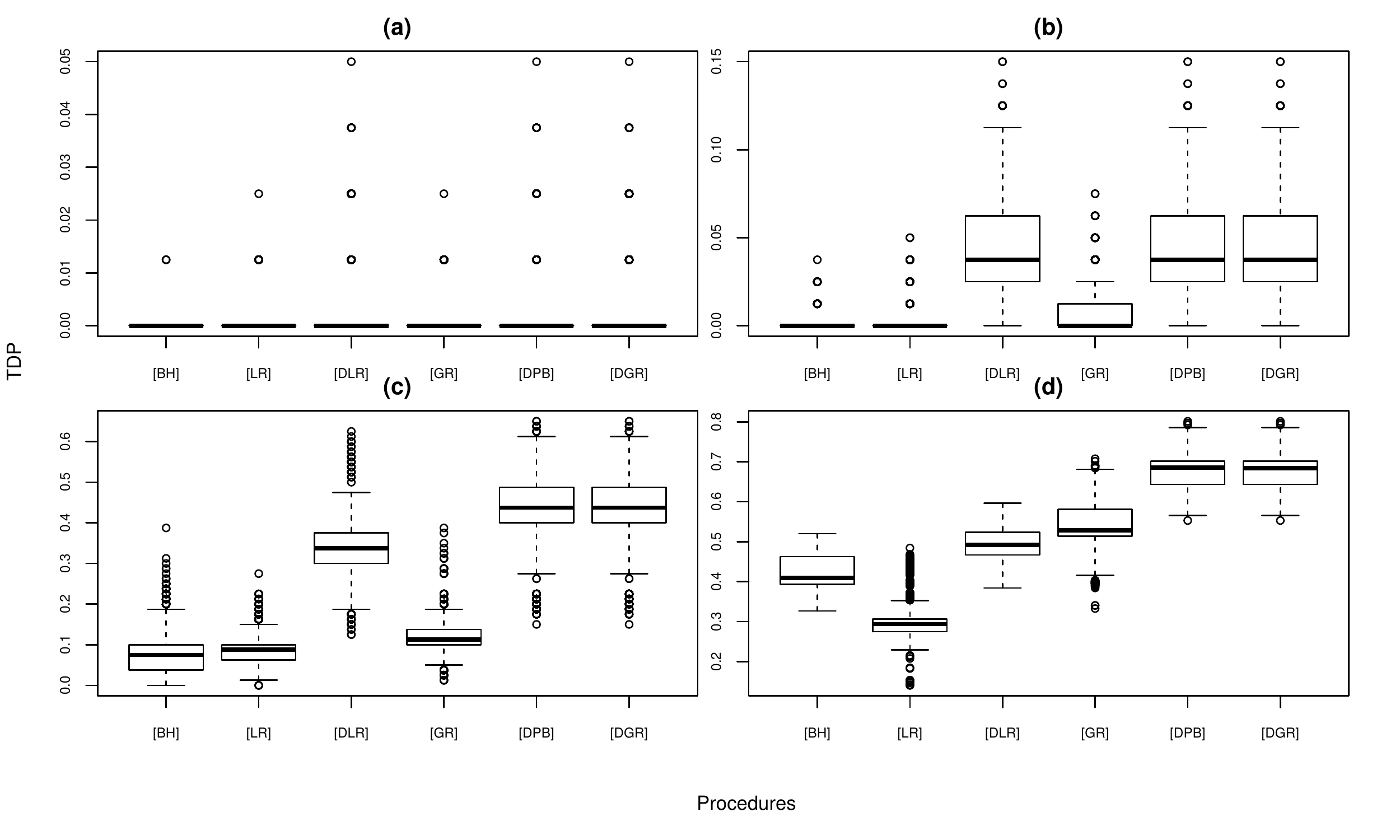}}
	\caption{Boxplots of the simulated true discovery proportions (TDP) for the $\LR$ and $\GR$ procedures, their discrete modifications and the $\BH$ procedure for $m=800$. Panel (a) - (c) show results for $m_3=80$, $m_1=144$ with $q_3=0.15, 0.25, 0.4$, panel (d) shows results for $m_3=640$, $m_1=80$ and $q=0.4$. }
	\label{fig:BoxplotsPower}
\end{figure}  
\begin{itemize}
	\item For $q_3=0.15$, the power of $\BH$, $\LR$ and $\GR$ is practically zero, whereas the discrete procedures are able to reject at least a few hypotheses, see  panel (a) of Figure~\ref{fig:BoxplotsPower}.
	\item For $q_3 = 0.25$, the  power of $\BH$ and $\LR$ stays close to zero, $\GR$  performs slighty better and the discrete variants perform best as illustrated in panel (b) of Figure~\ref{fig:BoxplotsPower}.
	\item For $q_3 = 0.4$, the  power of $\LR$ stays close to zero, while  $\BH$ now rejects a significant amount of hypotheses. The $\DPB$ and $\DGR$ procedures perform  best. If there is a large amount of alternatives, $\GR$ performs better than $\DLR$ (see panel (c) of Figure \ref{fig:BoxplotsPower}) in the other cases, $\GR$ is outperformed by $\DLR$  (see panel (d) of Figure \ref{fig:BoxplotsPower}).
	\item There is no relevant difference in power between the procedures $\DPB$ and $\DGR$.	
\end{itemize}
In summary, these results show that for $\LR$ and $\GR$, significant improvements are possible by using discreteness.

\subsection{Analysis of pharmacovigilance data}
We revisit the analysis of pharmacovigilance data from \cite{Heller2012} presented  in \cite{DDR2018}. This data set is obtained from a database for reporting, investigating and monitoring  adverse drug reactions due to the Medicines and Healthcare products Regulatory Agency in the United Kingdom. It contains the number of reported cases of amnesia as well as the total number of adverse events reported for each of the $m=2446$ drugs in the database. For a more detailed description of the data which is contained in the R-packages \cite{discreteMTP} and \cite{DJDR2018} we refer to \cite{Heller2012}. The works \cite{Heller2012} and \cite{DDR2018} investigate the association between reports of amnesia and suspected drugs by performing for each drug a Fisher's exact test (one-sided) for testing association between the drug and amnesia while adjusting for multiplicity by using several (discrete) FDR procedures. Applying the Benjamini-Hochberg procedure to this data set yields $24$ candidate drugs which could be associated with amnesia. Using the discrete FDR controlling procedures from \cite{DDR2018} yields $27$ candidate drugs.
 
 In what follows, we investigate the performance of the $\LR$, $\DLR$, $\GR$, $\DPB$ and  $\DGR$ procedures for analyzing this data set. First, we compare these procedures when the goal is control of  the median $\FDX$ instead of $\FDR$ at the $5\%$ level, i.e., we require $\P(\FDP > 5\%) \le 0.5$. Figure \ref{fig:FDXPharmacovigilanceCV} illustrates  the data and the critical constants of the  involved $\FDX$ controlling procedures.
 
\begin{figure}[htbp]
	\centering
	\makebox{\includegraphics[width=0.8\linewidth]{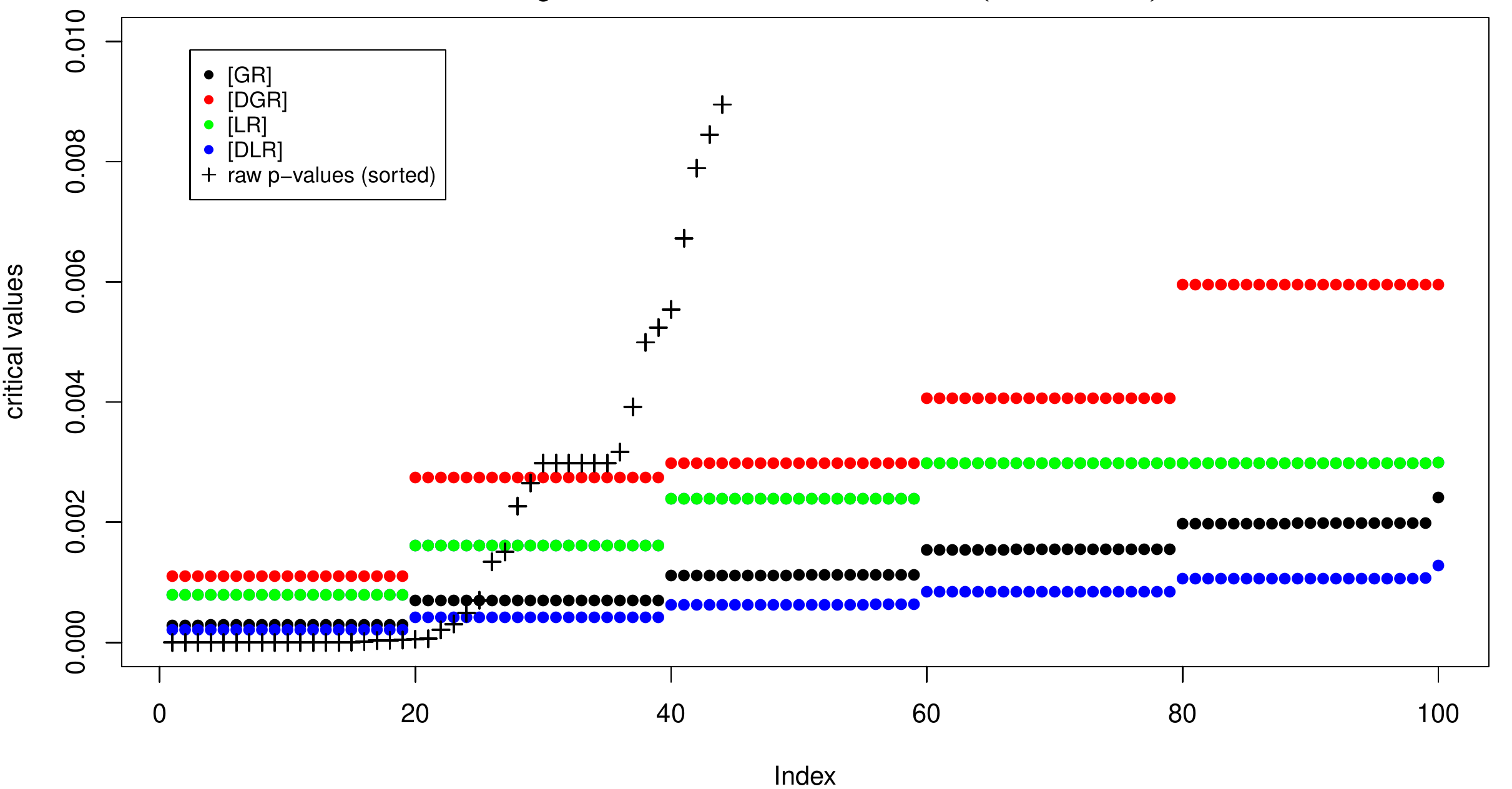}}
	\caption{Critical constants and sorted $p$-values (represented by black crosses) for median FDX control using the $\LR$, $\DLR$, $\GR$ and $\DGR$ procedures for the  pharmacovigilance data. 	\label{fig:FDXPharmacovigilanceCV}}
\end{figure}
 The benefit of taking discreteness into account is evident: the discrete critical values are considerably (by a factor of $2.5 \sim 4$) larger than their respective classical counterparts which leads to more powerful procedures, see also the first row of Table \ref{tab:Empirical data}.

\begin{table}[htbp]
	\caption{Number of rejections  for the pharmacovigilance data.\label{tab:Empirical data}}
	\centering
	\begin{tabular}{lccccc}
		\toprule
		\multicolumn{1}{c}{Procedure controls} &  $\LR$ 	& $\DLR$ & $\GR$  		& $\DPB$ & $\DGR$
		\\
		\midrule
		$\P(\FDP > 5\%) \le 0.5 $  &  23    & 27    & 24    & 29 & 29 \\
		$\P(\FDP > 5\%) \le 0.05 $  &  16    & 21    & 16    & 24 & 24 \\
		\bottomrule
	\end{tabular}
\end{table}

Note that the critical values of  $\DPB$ are not displayed in Figure \ref{fig:FDXPharmacovigilanceCV} since they are visually indistinguishable from the $\DGR$ critical values. Figure \ref{fig:FDXPharmacovigilanceBoxplot} shows that this is in fact true for all indices, thus $\DGR$ is not only an efficient, but also quite accurate approximation of the $\DPB$ values, at least for the discrete distribution involved in this example.
\begin{figure}[htbp]
	\centering
	\makebox{\includegraphics[width=0.7\linewidth]{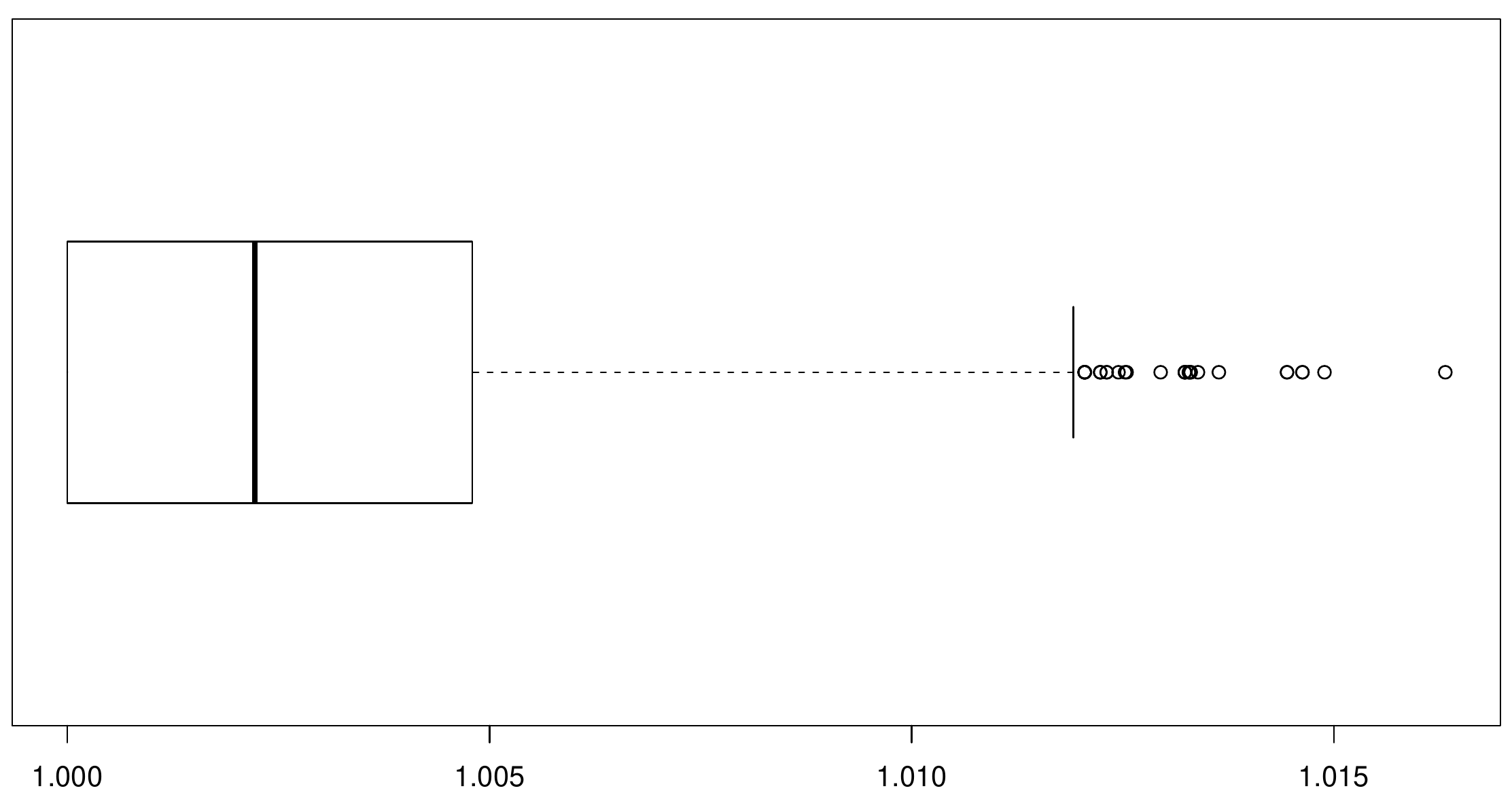}}
	\caption{Boxplot for the ratio of the $\DPB$ to the $\DGR$ critical values.   	\label{fig:FDXPharmacovigilanceBoxplot}}
\end{figure}

We also compare the performance of the above procedures over the full range of possible values for $\zeta$. Figure \ref{fig:FDXPharmacovigilance} depicts the number of rejections  when controlling $\P(\FDP > 5\%) \le \zeta $ for  $\zeta \in (0,1)$.  
\begin{figure}[htbp]
	\centering
	\makebox{\includegraphics[width=1\linewidth]{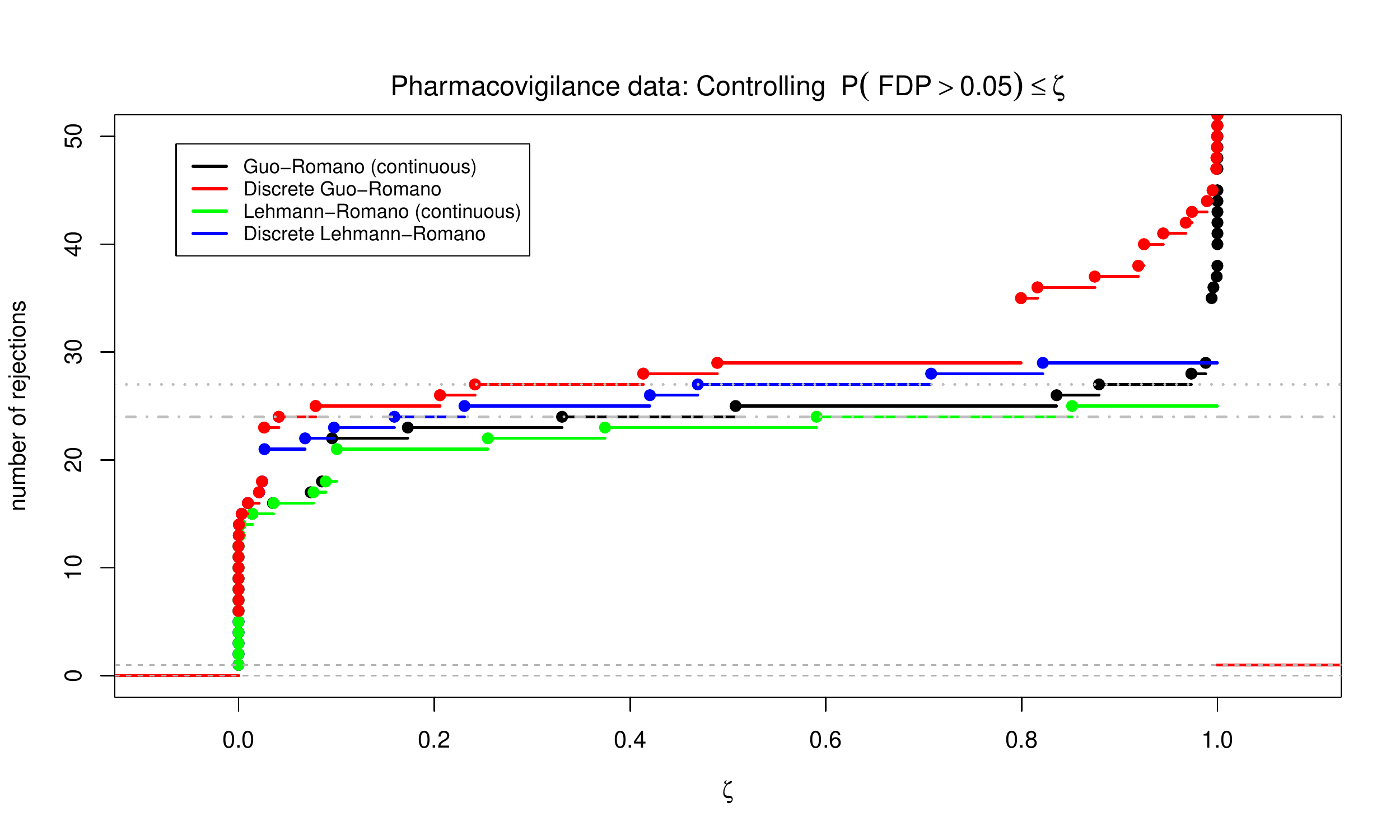}}
	\caption{Number of rejections ($y$-axis) for the pharmacovigilance data when using the $\LR$, $\DLR$, $\GR$ and $\DGR$ procedures. All procedures control the tail probabilities (on the $x$-axis) for the $\FDP$ exceeding $5\%$. The horizontal lines represent the rejections of the BH and discrete BH procedures at $\FDR$-level $5\%$.   	\label{fig:FDXPharmacovigilance}}
\end{figure}
As expected from Propositions~\ref{prop:HLR},~\ref{prop:HPB}~and~\ref{prop:HGR}, the discrete variants reject more hypotheses than their classical counterparts for all values of $\zeta$. For central values of $\zeta$, the gain is about three to four additionally rejected hypotheses, which corresponds roughly to the gain from using the discrete version of BH  instead of $\BH$ (see Table~1 in \cite{DDR2018}).  Figure \ref{fig:FDXPharmacovigilance} also shows that for more extreme values of $\zeta$ the gain may be more pronounced, e.g., when $\P(\FDP > 5\%) \le 0.05 $ is to be guaranteed, the $\GR$ procedure rejects 16 hypotheses, whereas the $\DGR$ procedure rejects 24 hypotheses (see the second row of Table \ref{tab:Empirical data}).

\section{Discussion}\label{sec:discussion}

In this paper, we presented new procedures controlling the FDX while incorporating the (heterogeneous) family of null distribution $\{F_i,1\leq i\leq m\}$. Markedly, it put forward that the geometric averaging of the $F_i$'s is a suitable operation for FDX control. This is new to our knowledge, as all previous works are mostly based on arithmetic averaging of the $F_i$'s (or variation thereof). Maybe more importantly, our approach led to a substantial power improvement in two common situations, under continuity of the tests statistics via weighting schemes, and for discrete test statistics when performing multiple individual Fisher's exact tests. 

This work opens several directions of research. First, the proofs of all our FDX bounds rely on using a kind of independence between the $p$-values (see \eqref{Indep0} and \eqref{Indep}). While this assumption is classical, it is desirable to remove this condition to better stick to the reality of the experiments. This generalization seems however challenging, as FDX control under dependence is already delicate to study in the homogeneous case, see \cite{DR2015}. 
A second interesting avenue is to derive theoretical bounds for the true discovery proportion (TDP) of our procedure. In particular, a useful concern would be to assess whether our way to account for heterogeneity (via arithmetic or geometric averaging of the $F_i$'s) is optimal in some sense. Lastly, our work paves the way to control other simultaneous inference criteria based on an event probability, e.g., to establish post hoc bounds in the discrete heterogeneous case, see \cite{GW2006,GS2011,BNR2019}. While challenging, this is a very exciting direction for future research.

\section*{Acknowledgements}
This work has been supported by ANR-16-CE40-0019 (SansSouci), ANR-17-CE40-0001 (BASICS) and by the GDR ISIS through the "projets exploratoires" program (project TASTY). 
The authors thank Florian Junge for implementing the discrete FDX procedures and improved Poisson-binomial distribution functions in R, and for running the simulations.

\bibliographystyle{apalike}
\bibliography{biblio}

\section{Materials for the proofs}\label{sec:theory}

\subsection{Proving the main tool}
The proof is based on the following result, which is a reformulation of Theorem~5.2 in \cite{Roq2011} in our context.

\begin{theorem}[\cite{Roq2011}]\label{maintool}
In the setting defined in Section~\ref{sec:setting}, consider  any step-down procedure $R$ with critical values $\tau_\l$, $1\leq \l\leq m$.
Then  for all $P\in\mathcal{P}$, we have
\begin{align}
&\FDX(R,P) \nonumber\\
&\leq\:
 \sum_{\ell=1}^m \mathds{1}\{|\cH_0(P)|\leq m(\l)\}\:\:\P_{X\sim P}\left(\sum_{i\in  \cH_0(P)} \mathds{1}\{p_i(X)\leq \tau_\l\} \geq \lfloor\alpha\l\rfloor+1, \wt{\l}(P)=\l\right),\label{keyrelation}
\end{align}
where $\wt{\l}(P)=\min\left\{ \ell \in \{1,\dots,m\} \::\: \l - \sum_{i\in  \cH_1(P)} \mathds{1}\{p_i(X)\leq \tau_\l\} \geq \lfloor\alpha\l\rfloor+1 \right\}$ (with $\wt{\l}(P)=m+1$ if the set is empty).
\end{theorem}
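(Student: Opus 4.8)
The plan is to reproduce, in the present heterogeneous notation, the deterministic combinatorial argument behind Theorem~5.2 of \cite{Roq2011}. Fix $P\in\mathcal{P}$; write $m_0=|\cH_0(P)|$, $m_1=|\cH_1(P)|$, $k_\l=\lfloor\alpha\l\rfloor+1$ (so that $m(\l)=m-\l+k_\l$), and $\FDP=|R\cap\cH_0(P)|/(|R|\vee 1)$. For $1\le\l\le m$ set $V_\l=\sum_{i\in\cH_0(P)}\ind{p_i(X)\le\tau_\l}$, $S_\l=\sum_{i\in\cH_1(P)}\ind{p_i(X)\le\tau_\l}$, and $N_\l=V_\l+S_\l$. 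I would first record two elementary deterministic facts about the step-down set \eqref{SDrejectionsetR}--\eqref{SDrewrite}, both consequences of $(\tau_\l)_\l$ being nondecreasing: $N_\l\ge\l$ for every $\l\le\wh\l$ (since $p_{\sigma(j)}\le\tau_j\le\tau_\l$ for $j\le\l$), and $|R|=\wh\l$ whenever $\wh\l\ge1$ (since $p_{\sigma(j)}>\tau_{\wh\l}$ for $j>\wh\l$). In particular $V_{\wh\l}=|R\cap\cH_0(P)|$, $S_{\wh\l}=|R\cap\cH_1(P)|$ and $V_{\wh\l}+S_{\wh\l}=\wh\l$ when $\wh\l\ge1$.

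Next I would establish the key deterministic inclusion. On the event $\{\FDP>\alpha\}$ one has $R\neq\emptyset$, hence $\wh\l\ge1$, and $|R\cap\cH_0(P)|>\alpha|R|=\alpha\wh\l$; since $V_{\wh\l}=|R\cap\cH_0(P)|$ is an integer this forces $V_{\wh\l}\ge k_{\wh\l}$, i.e. $\wh\l-S_{\wh\l}\ge k_{\wh\l}$. Thus $\wh\l$ belongs to the (nonempty) set whose minimum defines $\wt{\l}(P)$, so $\ell^{\ast}:=\wt{\l}(P)$ satisfies $1\le\ell^{\ast}\le\wh\l\le m$. Using $\ell^{\ast}\le\wh\l$ together with the first fact, $N_{\ell^{\ast}}\ge\ell^{\ast}$, whence $V_{\ell^{\ast}}=N_{\ell^{\ast}}-S_{\ell^{\ast}}\ge\ell^{\ast}-S_{\ell^{\ast}}\ge k_{\ell^{\ast}}$, the last inequality being precisely the defining property of $\ell^{\ast}=\wt{\l}(P)$. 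Hence on $\{\FDP>\alpha\}$ the index $\ell^{\ast}$ satisfies both $V_{\ell^{\ast}}\ge k_{\ell^{\ast}}$ and $\wt{\l}(P)=\ell^{\ast}$.

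It then remains to check $m_0\le m(\ell^{\ast})$, equivalently $\ell^{\ast}-k_{\ell^{\ast}}\le m_1$. If $\ell^{\ast}=1$ this is trivial since $k_1=1$. If $\ell^{\ast}\ge2$, minimality of $\wt{\l}(P)$ gives $(\ell^{\ast}-1)-S_{\ell^{\ast}-1}<k_{\ell^{\ast}-1}$; combining with $S_{\ell^{\ast}-1}\le m_1$ yields $(\ell^{\ast}-1)-k_{\ell^{\ast}-1}\le m_1-1$, and then $k_{\ell^{\ast}}-k_{\ell^{\ast}-1}\in\{0,1\}$ (for $\alpha<1$) gives $\ell^{\ast}-k_{\ell^{\ast}}\le(\ell^{\ast}-1-k_{\ell^{\ast}-1})+1\le m_1$. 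Putting the three steps together, $\{\FDP>\alpha\}\subseteq\bigcup_{\l=1}^m\bigl(\{m_0\le m(\l)\}\cap\{V_\l\ge k_\l\}\cap\{\wt{\l}(P)=\l\}\bigr)$, and the events indexed by $\l$ are pairwise disjoint because $\wt{\l}(P)$ is single-valued; taking probabilities of this disjoint union is exactly \eqref{keyrelation}.

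The step-down bookkeeping of the first paragraph is harmless, so I expect the only genuinely finicky point to be the floor-function arithmetic leading to $m_0\le m(\ell^{\ast})$ — the single place where the exact shape of $m(\l)$ and the monotonicity of $\l\mapsto\l-\lfloor\alpha\l\rfloor$ actually enter. I would also be careful to state the edge conventions ($\wh\l=0\Leftrightarrow R=\emptyset$, and $\wt{\l}(P)=m+1$ when its defining set is empty) so that the displayed inclusion is vacuously valid off $\{\FDP>\alpha\}$ and when $\wt{\l}(P)>m$.
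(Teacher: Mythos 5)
Your argument is correct and follows essentially the same route as the paper's own proof: on $\{\FDP>\alpha\}$ you show $\wh\l$ lies in the set defining $\wt\l(P)$, deduce $V_{\wt\l(P)}\geq\lfloor\alpha\wt\l(P)\rfloor+1$ from $N_{\wt\l(P)}\geq\wt\l(P)$, obtain $|\cH_0(P)|\leq m(\wt\l(P))$ from the minimality of $\wt\l(P)$, and decompose over the (disjoint) events $\{\wt\l(P)=\l\}$, exactly as in Section~\ref{sec:theory}. The only differences are presentational (you work directly with $\wh\l$ rather than first stating the inclusion for a generic $\l$ with $|R_{\l'}|\geq\l'$), so nothing further is needed.
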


Let us show that Theorem~\ref{maintool} implies \eqref{toolbound} under \eqref{Indep0}. 
 Under \eqref{Indep0}, $\wt{\l}(P)$ is independent of the variable family $$\left\{\sum_{i\in  \cH_0(P)} \mathds{1}\{p_i(X)\leq \tau_\l\},\:\: 1\leq \l\leq m\right\}.$$
Hence, \eqref{keyrelation} provides that $\FDX(R,P)$ is smaller or equal to
\begin{align*}
& \sum_{\ell=1}^m \mathds{1}\{|\cH_0(P)|\leq m(\l)\}\:\:\P_{X\sim P}\left(\sum_{i\in  \cH_0(P)} \mathds{1}\{p_i(X)\leq \tau_\l\} \geq \lfloor\alpha\l\rfloor+1\right) \P_{X\sim P}\left( \wt{\l}(P)=\l\right)\\
&\leq B(\tau,\alpha),
\end{align*}
which gives \eqref{toolbound}.

Finally, for completeness, let us now prove Theorem~\ref{maintool}. 
Let $R_\l=\sum_{i=1}^m \mathds{1}\{p_i(X)\leq \tau_\l\}$ for all $\l$. First, we have for any $\l\in\{1,$\dots$,m\}$ such that $|R_\l|=\l$:
\begin{align}
\{\FDP(R_\l,P)>\alpha\} &= \{|\cH_0(P)\cap R_\l|> \alpha \l\} 
=\{|\cH_0(P)\cap R_\l| \geq  \lfloor\alpha\l\rfloor +1\} \nonumber\\
&=\{ \l - |\cH_1(P)\cap R_\l|  \geq  \lfloor\alpha\l\rfloor +1\}\subset\{\l\geq \wt{\l}(P)\},\nonumber
\end{align}
by using the definition of $\wt{\l}(P)$.
Assuming now $|R_{\l'}|\geq \l'$ for any $\l'\leq \l$, we obtain
\begin{align*}
\{\FDP(R_\l,P)>\alpha\} &\subset \{\l\geq \wt{\l}(P), |R_{\wt{\l}(P)}|\geq \wt{\l}(P)\}\subset \{|\cH_0(P)\cap R_{\wt{\l}(P)}|\geq \lfloor\alpha\wt{\l}(P)\rfloor +1 \},
\end{align*}
where the last step uses the definition of $\wt{\l}(P)$.
Moreover, if $\wt{\l}(P)\geq 2$, again by definition of $\wt{\l}(P)$, we have $(\wt{\l}(P)-1) - \sum_{i\in  \cH_1(P)} \mathds{1}\{p_i(X)\leq \tau_{\wt{\l}(P)-1}\} < \lfloor\alpha(\wt{\l}(P)-1)\rfloor+1$. Hence, we obtain the following upper-bound for $|\cH_0(P)|$: 
$$|\cH_0(P)| = m-  |\cH_1(P)|\leq m-  |\cH_1(P)\cap R_{\wt{\l}(P)-1}| \leq m-\wt{\l}(P)+ \lfloor\alpha(\wt{\l}(P)-1)\rfloor+1.$$
Since the above bound is also true when $\wt{\l}(P)=1$, it holds for any possible value of $\wt{\l}(P)\leq m$. Since $\l=\wh{\l}$ in \eqref{SDrewrite} satisfies both $|R_\l|=\l$ and $|R_{\l'}|\geq \l'$ for any $\l'\leq \l$, combining the above displays gives \eqref{keyrelation}.

\subsection{Proof of Proposition~\ref{prop:equivSD}}
\label{sec:proof:prop:equivSD}

First, we have with $P$-probability $1$, for all $i\in\{1,\dots,m\}$, $p_i\in \mathcal{A}$, both under \eqref{cont} or \eqref{discrete}. Hence, by \eqref{inversecritvalues}, we have $\{\l\in\{1,\dots,m\}\::\:\xi_\l(p_{\sigma(\l)})\leq \zeta\}=\{\l\in\{1,\dots,m\}\::\:p_{\sigma(\l)}\leq \tau_{\l}\}$. By \eqref{SDrewrite}, this gives 
\begin{equation*}
\wh{\l} = \max\{\l\in \{0,\dots,m\}\::\: \forall \l'\leq \l, \: p'_{\l'}\leq \zeta\},
\end{equation*}
where we have denoted $p'_\l=\xi_{\l}(p_{\sigma(\l)})$ for all $\l$.
Now note that
\begin{align*}
\{\sigma(1),\dots,\sigma(\wh{\l})\}&=\{i\in\{1,\dots,m\}\::\: \sigma^{-1}(i)\in \{1,\dots, \wh{\l}\}\}\\
&= \{i\in\{1,\dots,m\}\::\: \forall \l \in \{1,\dots, \sigma^{-1}(i)\}, p'_\l\leq \zeta\}\\
&= \{i\in\{1,\dots,m\}\::\: \max_{\l \in \{1,\dots, \sigma^{-1}(i)\}}\{p'_\l\}\leq \zeta\},
\end{align*}
hence it is sufficient to prove that $\tilde{p}_i= \max_{\l \in \{1,\dots, \sigma^{-1}(i)\}}\{p'_\l\}$ for any $i\in\{1,\dots,m\}$.
For this, let us fix $i\in\{1,\dots,m\}$ and write 
$\{\l\in\{1,\dots,m\}\::\: p_{\sigma(\l)}\leq p_i\}=\{\l\in\{1,\dots,m\}\::\: \l\leq \sigma^{-1}(i)\} \cup A$, for $A= \{\l\in\{1,\dots,m\}\::\: p_{\sigma(\l)}\leq p_i,\l> \sigma^{-1}(i)\} $. This is possible because, by definition, $\l\leq \sigma^{-1}(i)$ implies $p_{\sigma(\l)}\leq p_i$.
Next, for any $\l\in A$, we have both $p_{\sigma(\l)}\leq p_i$ and $p_{\sigma(\l)}\geq p_i$, which entails
$ p_{\sigma(\l)}= p_i$ and thus $\xi_\l(p_{\sigma(\l)}) = \xi_\l(p_{i})$. Since  $\sigma^{-1}(i)\leq \l$ and by the nonincreasing property of $\l\mapsto \xi_\l(p_{i})$, we have $\xi_\l(p_{i}) \leq \xi_{\sigma^{-1}(i)}(p_{i})=\xi_{\sigma^{-1}(i)}(p_{\sigma(\sigma^{-1}(i))})$. 
This gives $p'_\l \leq p'_{\sigma^{-1}(i)}$ for all $\l\in A$. Therefore,
$$
\max_{\substack{1\leq \l\leq m\\p_{\sigma(\l)}\leq p_i}}\{p'_{\l}\} =\max_{\substack{1\leq \l\leq m\\\l\leq \sigma^{-1}(i)}}\{p'_{\l}\}  \vee \max_{\l\in A}\{p'_{\l}\}  
=\max_{\substack{1\leq \l\leq m\\\l\leq \sigma^{-1}(i)}}\{p'_{\l}\},
$$
which leads to the result.

\subsection{An auxiliary lemma}

\begin{lemma}\label{lem:monotoneHGRstar}
With the notation in \eqref{equFjt} the quantity 
\begin{equation}\label{quantityxi:HGRstar}
\P\left( \Bin\left[m-\l+i, \tilde{F}_{m-\l+i}(t) \right] \geq i\right)=\P\left( \Bin\left[m-\l+i, 1-\tilde{F}_{m-\l+i}(t) \right] \leq m-\l\right)
\end{equation} is non-increasing both in $i\in\{1,\dots,\lfloor \alpha\l\rfloor+1\}$ and $\l\in\{1,\dots,m\}$.
\end{lemma}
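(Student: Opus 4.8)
The plan is to establish the two asserted monotonicities separately. First I would note that the equality in \eqref{quantityxi:HGRstar} is nothing but the binomial‐flip identity $\P(\Bin[N,p]\geq k)=\P(\Bin[N,1-p]\leq N-k)$, so it suffices to argue with the first expression; write $n=m-\l+i$ and $g(i,\l)=\P(\Bin[n,\tilde F_{n}(t)]\geq i)$. A preliminary fact I would record is that $j\mapsto\tilde F_j(t)$ is non-increasing: by \eqref{equFjt}, $1-\tilde F_j(t)$ is the geometric mean of $1-(F(t))_{(1)}\leq\cdots\leq1-(F(t))_{(j)}$, and the running geometric mean of a non-decreasing sequence is non-decreasing (pass to logarithms; if some factor vanishes then all earlier ones vanish too, so that case is immediate). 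This will be used below.

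The monotonicity in $i$ (with $\l$ fixed) is the easy direction: increasing $i$ by one raises both the number of trials, $n\to n+1$, and the threshold, $i\to i+1$. Writing $\Bin[n+1,p]=\Bin[n,p]+\varepsilon$ with $\varepsilon\sim\mathrm{Ber}(p)$ independent and using $\{\Bin[n,p]+\varepsilon\geq i+1\}\subset\{\Bin[n,p]\geq i\}$, one gets $\P(\Bin[n+1,\tilde F_{n+1}(t)]\geq i+1)\leq\P(\Bin[n,\tilde F_{n+1}(t)]\geq i)$, which since $\tilde F_{n+1}(t)\leq\tilde F_n(t)$ and $\Bin[n,\cdot]$ is stochastically non-decreasing in its success probability is $\leq\P(\Bin[n,\tilde F_n(t)]\geq i)=g(i,\l)$.

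The hard part is the monotonicity in $\l$ (with $i$ fixed): increasing $\l$ by one lowers the number of trials, $n\to n-1$, while keeping the threshold $i$, so I must show $\P(\Bin[n-1,\tilde F_{n-1}(t)]\geq i)\leq\P(\Bin[n,\tilde F_n(t)]\geq i)$. The obstacle is that the elementary ``add one trial'' comparison is unavailable, since going from $n-1$ to $n$ trials simultaneously \emph{decreases} the success probability ($\tilde F_n(t)\leq\tilde F_{n-1}(t)$), and $\Bin[n-1,p]$ and $\Bin[n,p']$ are in general not stochastically ordered when $p'<p$; the specific relation between $\tilde F_{n-1}(t)$ and $\tilde F_n(t)$ must be exploited. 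The device I would use is to pad the parameter list with one extra coordinate: set $a_n=(F(t))_{(n)}$ and let $Z=\PBin[(\pi_1,\dots,\pi_n)]$ with $\pi_1=\dots=\pi_{n-1}=\tilde F_{n-1}(t)$ and $\pi_n=a_n$. On one hand $Z$ is the sum of a $\Bin[n-1,\tilde F_{n-1}(t)]$ variable and an independent $\mathrm{Ber}(a_n)$, hence stochastically dominates $\Bin[n-1,\tilde F_{n-1}(t)]$. On the other hand, Example~1.A.25 in \cite{Shaked} bounds $Z$ stochastically above by $\Bin[n,1-(\prod_{i=1}^n(1-\pi_i))^{1/n}]$, and by \eqref{equFjt} one has $\prod_{i=1}^n(1-\pi_i)=(1-\tilde F_{n-1}(t))^{n-1}(1-a_n)=\prod_{j'=1}^{n-1}(1-(F(t))_{(j')})\cdot(1-(F(t))_{(n)})=(1-\tilde F_n(t))^n$, so the dominating binomial is exactly $\Bin[n,\tilde F_n(t)]$. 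Chaining the two comparisons shows that $\Bin[n-1,\tilde F_{n-1}(t)]$ is stochastically smaller than $\Bin[n,\tilde F_n(t)]$, hence $g(i,\l+1)\leq g(i,\l)$. (In the range actually used one has $i\leq\lfloor\alpha\l\rfloor+1\leq\l\leq m$, so $1\leq n\leq m$ and $\tilde F_{n-1}(t)$, $(F(t))_{(n)}$ are well defined whenever such a comparison is needed.)

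Combining the two coordinatewise monotonicities finishes the argument; in particular, since $\xi^{\mbox{{\tiny HGR}}}_\l(t)=g(\lfloor\alpha\l\rfloor+1,\l)$ and $\l\mapsto\lfloor\alpha\l\rfloor+1$ is non-decreasing, it gives $\xi^{\mbox{{\tiny HGR}}}_{\l+1}(t)\leq\xi^{\mbox{{\tiny HGR}}}_\l(t)$, the property needed in \eqref{condtransfo}. I expect the padding step in the $\l$-monotonicity — reinterpreting $\Bin[n-1,\tilde F_{n-1}(t)]$ as a Poisson--binomial on $n$ coordinates so as to apply the geometric-mean stochastic bound of \cite{Shaked} — to be the only genuinely non-routine point; everything else reduces to stochastic monotonicity of the binomial in the number of trials and in the success probability.
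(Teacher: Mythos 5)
Your argument is correct and coincides with the paper's proof: both record that $j\mapsto 1-\tilde F_j(t)$ is non-decreasing, dispose of the $i$-monotonicity by elementary binomial monotonicity, and prove the $\l$-monotonicity by splitting off a single Bernoulli factor with parameter $(F(t))_{(j+1)}$, verifying $(1-\tilde F_{j+1}(t))^{j+1}=(1-\tilde F_j(t))^{j}\,(1-(F(t))_{(j+1)})$, and invoking Example~1.A.25 of \cite{Shaked} to dominate the resulting sum by $\Bin[j+1,\tilde F_{j+1}(t)]$. Your phrasing of that step as a Poisson--binomial with $n$ coordinates rather than a sum of two binomial blocks is only a cosmetic difference.
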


\begin{proof}
First note that 
$$1-\tilde{F}_{j}(t)=\left(\prod_{j'=1}^{j} (1-(F(t))_{(j')})\right)^{1/j}$$
 in non-decreasing in $j$ (because the geometric average of larger numbers is larger).  The quantity \eqref{quantityxi:HGRstar} is thus non-increasing with respect to $i$, so that the only thing to check is that this quantity is non-increasing with respect to $\l$. For this, it is sufficient to prove that 
$
\Bin\left[j+1, \tilde{F}_{j+1}(t) \right]
$
is stochastically larger than 
$
\Bin\left[j, \tilde{F}_{j}(t) \right]
$
for any $j\in\{1,\dots,m-1\}$ (which is not obvious because $\tilde{F}_{j}(t)\geq \tilde{F}_{j+1}(t)$).
Let $n_1=j$,  $p_1 = \tilde{F}_{j}(t)$, $n_2=1$, $p_2 = (F(t))_{(j+1)}$, $n=j+1$ and $p=\tilde{F}_{j+1}(t)$.
We easily check that $n=n_1+n_2$ and by \eqref{equFjt}, 
\begin{align*}
(1-p)^n &= \prod_{j'=1}^{j+1} (1-(F(t))_{(j')})\\
&= \prod_{j'=1}^{j} (1-(F(t))_{(j')})\times (1-(F(t))_{(j+1)})=(1-p_1)^{n_1} (1-p_2)^{n_2}.
\end{align*}
Applying Example 1.A.25 in \cite{Shaked} ($m=2$ with the notation therein), we obtain that 
the sum of a
$
\Bin\left[n_1, p_1 \right]
$
variable 
and a 
$
\Bin\left[n_2, p_2 \right]
$
variable (with independence)
is stochastically smaller than a $\Bin\left[n, p \right]$ variable.
In particular, a $
\Bin\left[n_1, p_1 \right]
$ variable
is stochastically smaller than a $\Bin\left[n, p \right]$ variable. This gives the result.
\end{proof}

\appendix

\section{Additional numerical details}\label{sec:appendix}

\begin{table}[htbp]
 \centering
\small
\caption{Average power (i.e. average of true discovery proportion) of FDX controlling procedures (at $\zeta=0.5$) for $N=25$.}
{\tiny
\begin{tabular}{cccc||cccccc}
	\toprule
	$m$     & $m_3$    & $m_1$    & $q_3$    & [BH]    & [LR]    & [DLR]   & [GR]    & [DPB]  &  [DGR] \\\midrule
		800   & 80    & 144   & 0.15  & 0     & 0     & 0.0025 & 0.0002 & 0.0025 & 0.0025 \\
		&       & 144   & 0.25  & 0.0004 & 0     & 0.043 & 0.0077 & 0.043 & 0.043 \\
		&       & 144   & 0.4   & 0.0803 & 0     & 0.3328 & 0.1195 & 0.4412 & 0.4406 \\
		&       & 360   & 0.15  & 0     & 0     & 0.0025 & 0.0002 & 0.0043 & 0.0043 \\
		&       & 360   & 0.25  & 0.0004 & 0     & 0.043 & 0.0077 & 0.0444 & 0.0444 \\
		&       & 360   & 0.4   & 0.0803 & 0     & 0.3766 & 0.1195 & 0.4512 & 0.4511 \\
		&       & 576   & 0.15  & 0     & 0     & 0.0071 & 0.0002 & 0.0076 & 0.0076 \\
		&       & 576   & 0.25  & 0.0004 & 0     & 0.0528 & 0.0077 & 0.077 & 0.077 \\
		&       & 576   & 0.4   & 0.0803 & 0     & 0.4474 & 0.1195 & 0.5141 & 0.5128 \\
		& 240   & 112   & 0.15  & 0     & 0     & 0.0025 & 0.0002 & 0.0025 & 0.0025 \\
		&       & 112   & 0.25  & 0.0005 & 0     & 0.0289 & 0.0076 & 0.0422 & 0.0422 \\
		&       & 112   & 0.4   & 0.2148 & 0     & 0.425 & 0.1984 & 0.5153 & 0.5139 \\
		&       & 280   & 0.15  & 0     & 0     & 0.0025 & 0.0002 & 0.0025 & 0.0025 \\
		&       & 280   & 0.25  & 0.0005 & 0     & 0.0336 & 0.0076 & 0.0429 & 0.0429 \\
		&       & 280   & 0.4   & 0.2147 & 0     & 0.4413 & 0.1983 & 0.5728 & 0.5716 \\
		&       & 448   & 0.15  & 0     & 0     & 0.0025 & 0.0002 & 0.0037 & 0.0037 \\
		&       & 448   & 0.25  & 0.0005 & 0     & 0.0389 & 0.0076 & 0.043 & 0.043 \\
		&       & 448   & 0.4   & 0.2145 & 0     & 0.4609 & 0.1983 & 0.5921 & 0.5917 \\
		& 640   & 32    & 0.15  & 0     & 0     & 0.0018 & 0.0002 & 0.0025 & 0.0025 \\
		&       & 32    & 0.25  & 0.001 & 0.0003 & 0.0203 & 0.0075 & 0.0212 & 0.0212 \\
		&       & 32    & 0.4   & 0.4243 & 0.0203 & 0.4908 & 0.5379 & 0.673 & 0.6724 \\
		&       & 80    & 0.15  & 0     & 0     & 0.002 & 0.0002 & 0.0025 & 0.0025 \\
		&       & 80    & 0.25  & 0.001 & 0.0003 & 0.0203 & 0.0075 & 0.0212 & 0.0212 \\
		&       & 80    & 0.4   & 0.4242 & 0.0203 & 0.4974 & 0.5374 & 0.6746 & 0.6743 \\
		&       & 128   & 0.15  & 0     & 0     & 0.0021 & 0.0002 & 0.0025 & 0.0025 \\
		&       & 128   & 0.25  & 0.001 & 0.0003 & 0.0203 & 0.0075 & 0.0212 & 0.0212 \\
		&       & 128   & 0.4   & 0.424 & 0.0203 & 0.5048 & 0.5369 & 0.6753 & 0.675 \\
		2000  & 200   & 360   & 0.15  & 0     & 0     & 0.0007 & 0     & 0.0022 & 0.0022 \\
		&       & 360   & 0.25  & 0.0001 & 0     & 0.0198 & 0.0029 & 0.0222 & 0.0222 \\
		&       & 360   & 0.4   & 0.073 & 0     & 0.3331 & 0.0792 & 0.4315 & 0.4311 \\
		&       & 900   & 0.15  & 0     & 0     & 0.0022 & 0     & 0.0024 & 0.0024 \\
		&       & 900   & 0.25  & 0.0001 & 0     & 0.021 & 0.0029 & 0.0373 & 0.0373 \\
		&       & 900   & 0.4   & 0.073 & 0     & 0.338 & 0.0792 & 0.4515 & 0.4515 \\
		&       & 1440  & 0.15  & 0     & 0     & 0.0024 & 0     & 0.0024 & 0.0024 \\
		&       & 1440  & 0.25  & 0.0001 & 0     & 0.0378 & 0.0029 & 0.0428 & 0.0428 \\
		&       & 1440  & 0.4   & 0.0729 & 0     & 0.432 & 0.0792 & 0.5173 & 0.5144 \\
		& 600   & 280   & 0.15  & 0     & 0     & 0.0007 & 0     & 0.0007 & 0.0007 \\
		&       & 280   & 0.25  & 0.0001 & 0     & 0.0197 & 0.0029 & 0.0205 & 0.0205 \\
		&       & 280   & 0.4   & 0.2058 & 0     & 0.4093 & 0.196 & 0.5194 & 0.5176 \\
		&       & 700   & 0.15  & 0     & 0     & 0.0007 & 0     & 0.002 & 0.002 \\
		&       & 700   & 0.25  & 0.0001 & 0     & 0.02  & 0.0029 & 0.0205 & 0.0205 \\
		&       & 700   & 0.4   & 0.2058 & 0     & 0.4374 & 0.196 & 0.5678 & 0.5657 \\
		&       & 1120  & 0.15  & 0     & 0     & 0.0014 & 0     & 0.0024 & 0.0024 \\
		&       & 1120  & 0.25  & 0.0001 & 0     & 0.0201 & 0.0029 & 0.0206 & 0.0206 \\
		&       & 1120  & 0.4   & 0.2057 & 0     & 0.4545 & 0.1959 & 0.5908 & 0.5906 \\
		& 1600  & 80    & 0.15  & 0     & 0     & 0.0007 & 0     & 0.0007 & 0.0007 \\
		&       & 80    & 0.25  & 0.0003 & 0.0001 & 0.009 & 0.0029 & 0.0172 & 0.0172 \\
		&       & 80    & 0.4   & 0.4223 & 0.0114 & 0.4823 & 0.5288 & 0.6665 & 0.6658 \\
		&       & 200   & 0.15  & 0     & 0     & 0.0007 & 0     & 0.0007 & 0.0007 \\
		&       & 200   & 0.25  & 0.0003 & 0.0001 & 0.009 & 0.0029 & 0.0184 & 0.0184 \\
		&       & 200   & 0.4   & 0.4222 & 0.0114 & 0.4866 & 0.5286 & 0.6689 & 0.6683 \\
		&       & 320   & 0.15  & 0     & 0     & 0.0007 & 0     & 0.0007 & 0.0007 \\
		&       & 320   & 0.25  & 0.0003 & 0.0001 & 0.009 & 0.0029 & 0.0194 & 0.0194 \\
		&       & 320   & 0.4   & 0.422 & 0.0114 & 0.4935 & 0.5283 & 0.6724 & 0.6715 \\
	\end{tabular}%
}
	\label{tab:addlabel}%
\end{table}%

\end{document}